\documentclass[conference,onecolumn]{IEEEtran}
\IEEEoverridecommandlockouts
\usepackage[utf8]{inputenc}
\usepackage[bookmarks=false]{hyperref}
\usepackage[noadjust]{cite}
\usepackage{soul}
\usepackage{amsmath,amssymb,amsfonts,amsthm,mathtools,bbm}
\usepackage[capitalize]{cleveref}

\usepackage[dvipsnames]{xcolor}
\usepackage{graphicx}
\usepackage{tikz}
\usepackage{balance}

\usepackage[normalem]{ulem}

\newtheorem{theorem}{Theorem}
\newtheorem{definition}{Definition}

\newtheorem{remark}{Remark}
\newtheorem{lemma}{Lemma}

\newcommand{\expectation}{\mathbb{E}}

\usepackage{microtype}
\def\BibTeX{{\rm B\kern-.05em{\sc i\kern-.025em b}\kern-.08em
    T\kern-.1667em\lower.7ex\hbox{E}\kern-.125emX}}
\begin{document}

\title{Joint Identification, Transmission, and Sensing over State-Dependent Channels with Noiseless Feedback
}

\author{
\IEEEauthorblockN{
Vida Gholamian \IEEEauthorrefmark{1}\IEEEauthorrefmark{2},
Marcel A. Mross\IEEEauthorrefmark{1},
Roberto Ferrara\IEEEauthorrefmark{1},
Eduard A. Jorswieck\IEEEauthorrefmark{1},
Christian Deppe\IEEEauthorrefmark{1}\IEEEauthorrefmark{2}
}

\vspace{0.5em}

\IEEEauthorblockA{\IEEEauthorrefmark{1}
Technische Universität Braunschweig, Institute for Communications Technology, Brunswick, Germany}

\IEEEauthorblockA{\IEEEauthorrefmark{2}
BMFTR Research Hub 6G-life, Germany}

\IEEEauthorblockA{
Email: \{v.gholamian-sefiddarboni, m.mross, roberto.ferrara, e.jorswieck, christian.deppe\}@tu-bs.de
}

%\vspace{-0.8cm}
\vspace{-0.4cm}
}

\maketitle

\begin{abstract}
    We consider a discrete memoryless state-dependent channel with noiseless feedback, where the transmitter uses randomized encoding to jointly perform message transmission, identification, and state sensing. In this setting, the transmitter sends a conventional transmission message and an identification message while estimating the unknown channel state sequence under an average distortion constraint. We establish a single-letter characterization of the randomized identification-transmission-sensing capacity-distortion region. The result reveals a single-letter separation structure:
the communication rates are determined by the averaged
channel, subject to the sensing-induced constraint on
admissible input distributions. Specifically, the transmission rate is characterized by mutual information, while the identification rate is characterized by output entropy under the same distortion constraint. The achievability proof combines constant-composition transmission coding, output partitioning, and combinatorial identification coding, while the converse follows from an averaged-channel reduction, distortion lower bounds, and feedback-identification converse arguments.

\end{abstract}

%\begin{IEEEkeywords}
%component, formatting, style, styling, insert
%\end{IEEEkeywords}

\section{Introduction}
Reliable communication over noisy channels traditionally focuses on message
transmission, where one of exponentially many messages is decoded with
vanishing error probability~\cite{shannon1948mathematical}. Identification via
channels, introduced by Ahlswede and Dueck, considers a different task: each
receiver only tests whether a particular message of interest was sent
~\cite{ahlswede1989identification,ja1985identification}. As a result, the
number of identifiable messages can scale as \(2^{2^{nR}}\) with the
blocklength \(n\), for some identification rate \(R>0\)
~\cite{ahlswede1989identification}.

Joint identification and message transmission over DMCs was later studied by
Han and Verd\'u~\cite{han2002new}. For channels with noiseless feedback,
Ahlswede and Dueck showed that feedback can significantly increase
identification capability; in particular, the randomized identification
capacity is characterized by the maximum output entropy rather than the
maximum mutual information~\cite{ahlswedeidentification,egger2024maximal}.
The joint identification-transmission problem with feedback was further
investigated in~\cite{venkatesan2002identification}.

Recently, integrated sensing and communication (ISAC) has attracted growing
attention~\cite{kobayashi2018joint,ahmadipour2025exploring,luo2025integrated}.
Motivated by applications in radar systems, autonomous networks, molecular
communication, and next-generation wireless systems, ISAC seeks to combine
data transmission and environmental sensing using the same resources
~\cite{wu11169758,liu2022information,zheng2026molecular,li2024analysis}.
From an information-theoretic perspective, this has led to communication
models with sensing or state-estimation constraints, often using feedback
observations to estimate unknown channel states~\cite{ahmadipour2022information,kobayashi2018joint,ahmadipour2021joint,chen2023general}.
ISAC has also been studied in the context of identification
~\cite{labidi2024joint,zhao2024achievable}.
However, the joint interplay
between identification, transmission, and sensing over channels with feedback
remains largely unexplored.

In this paper, we study a discrete memoryless state-dependent channel with
noiseless feedback. The state sequence is unknown to both transmitter and
receiver, but its distribution is known. The transmitter uses randomized
encoding to send a transmission message and an identification message while
estimating the state sequence under an average distortion constraint.

Our main contribution is a single-letter characterization of the randomized
identification-transmission-sensing (ITS) capacity-distortion region. We show
that the region exhibits a single-letter separation structure: after averaging
over the i.i.d.\ states, the communication rates are determined by the
averaged DMC with feedback, subject to the sensing-induced constraint
on admissible input distributions. Consequently, for a given transmission rate \(R_{\mathrm T}\),
the optimal identification rate is the largest output entropy achievable under
both the distortion and transmission-rate constraints.

The achievability proof combines constant-composition transmission coding,
equiprobable output partitioning, and combinatorial array constructions
inspired by~\cite{venkatesan2002identification}. The converse follows from
averaged-channel reduction, distortion-constrained input characterization, and
feedback-identification converse arguments. When the distortion constraint is
inactive, the result recovers the classical randomized
identification-transmission region with feedback; otherwise, sensing reduces
the feasible input distributions and can limit both rates.

The remainder of this paper is organized as follows. Section~II introduces the
system model and main result. Section~III provides auxiliary lemmas.
Section~IV presents the proof of the main result, and Section~V concludes
the paper.

\section{System Model and Main Result}

    The state-dependent discrete memoryless channel (DMCS) is
characterized by \((\mathcal X\times\mathcal S,W_S,\mathcal Y)\), where
\(\mathcal X\), \(\mathcal S\), and \(\mathcal Y\) are finite alphabets and $W_S(y|x,s)$
denotes the conditional probability mass function on $\mathcal{Y}$~\cite{csiszar2011information,polyanskiy2025information}.
The state sequence \(S^n\) is i.i.d. according to \(P_S\).
The transmitter and receiver know \(P_S\), but not the realization
of the state sequence. In particular, \(S_t\) is independent of
\((X_t,Y^{t-1})\) under causal encoding.
    Furthermore, define the averaged channel
    \begin{equation}
    \bar W(y|x)
    \coloneqq
    \sum_{s\in\mathcal S}P_S(s)\,W_S(y|x,s).
    \end{equation}
    For an input distribution \(P\in\mathcal P(\mathcal X)\), let $P\bar W$ be the induced
    output distribution defined by
    \begin{equation}
    (P\bar W)(y)
    \coloneqq
    \sum_{x\in\mathcal X} P(x)\bar W(y|x),
    \qquad y\in\mathcal Y.
    \end{equation}
    If \(x^n\in\mathcal X^n\) is transmitted, \(s^n\in\mathcal S^n\) occurs, and
    \(y^n\in\mathcal Y^n\) is received,
    then
    \begin{equation}
    W_S^n(y^n|x^n,s^n)
    =
    \prod_{t=1}^n W_S(y_t|x_t,s_t).
    \end{equation}
Throughout this paper, all logarithms are to base two.
\subsection{Feedback}
    A DMCS $W_S$ with noiseless feedback allows the transmitter
to observe all previous channel outputs without noise.
Thus, before choosing \(X_t\), the encoder can use the previous
channel outputs \(Y^{t-1}\).
An \(n\)-step deterministic feedback strategy is defined as a sequence of functions
\[f=(f_1,\ldots,f_n)\in\mathcal F_n, \qquad
f_t:\mathcal Y^{t-1}\to\mathcal X,\] that defines the encoder via \[X_t=f_t(Y^{t-1})\].
    
    Given a feedback strategy \(f\in\mathcal F_n\), we define
    \begin{align}
    W_S^n(y^n|f,s^n)
    &\coloneqq
    \prod_{t=1}^n
    W_S\bigl(y_t|f_t(y^{t-1}), s_t\bigr).
    \\
    \label{eq:averaged-feedback-channel}
    \bar W^n(y^n|f)
    &\coloneqq
    \prod_{t=1}^n
    \bar W\bigl(y_t|f_t(y^{t-1})\bigr).
    \end{align}
    Notice that by linearity, averaging over the i.i.d. state sequence gives
    \begin{equation}
    \label{eq:state-averaging-identity}
    \sum_{s^n\in\mathcal S^n}
    P_S^n(s^n)W_S^n(y^n|f,s^n)
    =
    \bar W^n(y^n|f).
    \end{equation}
    
    A randomized feedback strategy is a probability distribution \(Q_F\) on
    \(\mathcal F_n\).
    %\vg{should be remove (The transmitter sends a pair \((\alpha,m)\in[N]\times[M]\), where \(\alpha\) is the identification message and \(m\) is the transmission message. For each \((\alpha,m)\), the encoder randomly selects \(f\in\mathcal F^n\) according to \(Q_F(\cdot|\alpha,m)\). The state estimator produces \(\hat S^n=h(X^n,Y^n)\in\mathcal S^n\)).
    %\rf{remove \((\alpha,m)\) from here, the ITS code is a then a family of feedback strategies for each \((\alpha,m)\), but feedback strategies are now defined independently of codes, just like probability distributions and strings exist independently of encoders}}
    
\subsection{Sensing}
    To measure the sensing performance of any possible estimate, we define the per-symbol distortion as
    \begin{equation}
    d_t:=\mathbb{E}\bigl[d(S_t,\hat{S}_t)\bigr],
    \end{equation}
    where \(d:\mathcal S\times\mathcal S\to[0,\infty)\) is an arbitrary
    nonnegative function, which we call the distortion function. The expectation
    is taken with respect to the joint distribution induced by the encoding,
    feedback, channel, and estimation scheme.
    \begin{figure}
        \centering
        \scalebox{1}{% \documentclass{standalone}
% \usepackage{pgfplots}
% \pgfplotsset{compat}
% \usepackage[utf8]{inputenc}

% \begin{document}

\tikzset{every picture/.style={line width=0.75pt}} %set default line width to 0.75pt        

\begin{tikzpicture}[x=0.75pt,y=0.75pt,yscale=-1,xscale=1]
%uncomment if require: \path (0,300); %set diagram left start at 0, and has height of 300

%Rounded Rect [id:dp3889560822286926] 
\draw   (82,61.7) .. controls (82,58) and (85,55) .. (88.7,55) -- (152.3,55) .. controls (156,55) and (159,58) .. (159,61.7) -- (159,81.8) .. controls (159,85.5) and (156,88.5) .. (152.3,88.5) -- (88.7,88.5) .. controls (85,88.5) and (82,85.5) .. (82,81.8) -- cycle ;
%Rounded Rect [id:dp7317105833299926] 
\draw   (82.4,114.9) .. controls (82.4,111.2) and (85.4,108.2) .. (89.1,108.2) -- (152.7,108.2) .. controls (156.4,108.2) and (159.4,111.2) .. (159.4,114.9) -- (159.4,135) .. controls (159.4,138.7) and (156.4,141.7) .. (152.7,141.7) -- (89.1,141.7) .. controls (85.4,141.7) and (82.4,138.7) .. (82.4,135) -- cycle ;
%Rounded Rect [id:dp5869180715564811] 
\draw   (261,114.9) .. controls (261,111.2) and (264,108.2) .. (267.7,108.2) -- (315.9,108.2) .. controls (319.6,108.2) and (322.6,111.2) .. (322.6,114.9) -- (322.6,135) .. controls (322.6,138.7) and (319.6,141.7) .. (315.9,141.7) -- (267.7,141.7) .. controls (264,141.7) and (261,138.7) .. (261,135) -- cycle ;
%Rounded Rect [id:dp573714317425606] 
\draw   (399.5,110.96) .. controls (399.5,105.29) and (404.09,100.7) .. (409.76,100.7) -- (462.64,100.7) .. controls (468.31,100.7) and (472.9,105.29) .. (472.9,110.96) -- (472.9,141.74) .. controls (472.9,147.41) and (468.31,152) .. (462.64,152) -- (409.76,152) .. controls (404.09,152) and (399.5,147.41) .. (399.5,141.74) -- cycle ;
%Straight Lines [id:da26822684871036306] 
\draw    (159,125.75) -- (259.4,125.41) ;
\draw [shift={(261.4,125.4)}, rotate = 179.8] [color={rgb, 255:red, 0; green, 0; blue, 0 }  ][line width=0.75]    (10.93,-3.29) .. controls (6.95,-1.4) and (3.31,-0.3) .. (0,0) .. controls (3.31,0.3) and (6.95,1.4) .. (10.93,3.29)   ;
%Straight Lines [id:da3243952830561859] 
\draw    (322.4,124.8) -- (397,124.99) ;
\draw [shift={(399,125)}, rotate = 180.15] [color={rgb, 255:red, 0; green, 0; blue, 0 }  ][line width=0.75]    (10.93,-3.29) .. controls (6.95,-1.4) and (3.31,-0.3) .. (0,0) .. controls (3.31,0.3) and (6.95,1.4) .. (10.93,3.29)   ;
%Straight Lines [id:da5317290285219953] 
\draw    (357.6,125) -- (357.4,33.8) ;
%Straight Lines [id:da7370686023259566] 
\draw    (357.4,33.8) -- (122.4,34.6) ;
%Rounded Rect [id:dp9523057276487921] 
\draw  [dash pattern={on 0.84pt off 2.51pt}] (78.2,46.08) .. controls (78.2,36.65) and (85.85,29) .. (95.28,29) -- (146.52,29) .. controls (155.95,29) and (163.6,36.65) .. (163.6,46.08) -- (163.6,138.02) .. controls (163.6,147.45) and (155.95,155.1) .. (146.52,155.1) -- (95.28,155.1) .. controls (85.85,155.1) and (78.2,147.45) .. (78.2,138.02) -- cycle ;
%Straight Lines [id:da9324466775894953] 
\draw    (122.4,34.6) -- (121.68,52.6) ;
\draw [shift={(121.6,54.6)}, rotate = 272.29] [color={rgb, 255:red, 0; green, 0; blue, 0 }  ][line width=0.75]    (10.93,-3.29) .. controls (6.95,-1.4) and (3.31,-0.3) .. (0,0) .. controls (3.31,0.3) and (6.95,1.4) .. (10.93,3.29)   ;
%Rounded Rect [id:dp5361162875307939] 
\draw   (272.8,174.44) .. controls (272.8,171.21) and (275.41,168.6) .. (278.64,168.6) -- (302.56,168.6) .. controls (305.79,168.6) and (308.4,171.21) .. (308.4,174.44) -- (308.4,191.96) .. controls (308.4,195.19) and (305.79,197.8) .. (302.56,197.8) -- (278.64,197.8) .. controls (275.41,197.8) and (272.8,195.19) .. (272.8,191.96) -- cycle ;
%Straight Lines [id:da8423751886093759] 
\draw    (290.6,168.4) -- (290.82,142) ;
\draw [shift={(290.83,140)}, rotate = 90.47] [color={rgb, 255:red, 0; green, 0; blue, 0 }  ][line width=0.75]    (10.93,-3.29) .. controls (6.95,-1.4) and (3.31,-0.3) .. (0,0) .. controls (3.31,0.3) and (6.95,1.4) .. (10.93,3.29)   ;
%Straight Lines [id:da7739815274076867] 
\draw    (474.8,115.8) -- (518.4,115.8) ;
\draw [shift={(520.4,115.8)}, rotate = 180] [color={rgb, 255:red, 0; green, 0; blue, 0 }  ][line width=0.75]    (10.93,-3.29) .. controls (6.95,-1.4) and (3.31,-0.3) .. (0,0) .. controls (3.31,0.3) and (6.95,1.4) .. (10.93,3.29)   ;
%Straight Lines [id:da013839737422059906] 
\draw    (197.5,74.25) -- (159,74.25) ;
\draw [shift={(157,74.25)}, rotate = 360] [color={rgb, 255:red, 0; green, 0; blue, 0 }  ][line width=0.75]    (10.93,-3.29) .. controls (6.95,-1.4) and (3.31,-0.3) .. (0,0) .. controls (3.31,0.3) and (6.95,1.4) .. (10.93,3.29)   ;
%Straight Lines [id:da4936010085189688] 
\draw    (197.5,74.25) -- (198.1,125.3) ;
%Straight Lines [id:da2955781193829474] 
\draw  [dash pattern={on 4.5pt off 4.5pt}]  (117.5,94.25) -- (117.5,107.25) ;
\draw [shift={(117.5,109.25)}, rotate = 270] [color={rgb, 255:red, 0; green, 0; blue, 0 }  ][line width=0.75]    (10.93,-3.29) .. controls (6.95,-1.4) and (3.31,-0.3) .. (0,0) .. controls (3.31,0.3) and (6.95,1.4) .. (10.93,3.29)   ;
%Straight Lines [id:da9424487406917154] 
\draw  [dash pattern={on 4.5pt off 4.5pt}]  (117.5,94.25) -- (208.5,94.25) ;
%Straight Lines [id:da6052237273821827] 
\draw  [dash pattern={on 4.5pt off 4.5pt}]  (208,34.5) -- (208.5,94.25) ;
%Straight Lines [id:da09657415978875161] 
\draw    (82.5,73.25) -- (44,73.25) ;
\draw [shift={(42,73.25)}, rotate = 360] [color={rgb, 255:red, 0; green, 0; blue, 0 }  ][line width=0.75]    (10.93,-3.29) .. controls (6.95,-1.4) and (3.31,-0.3) .. (0,0) .. controls (3.31,0.3) and (6.95,1.4) .. (10.93,3.29)   ;
%Straight Lines [id:da20965367670816082] 
\draw    (40.4,126.4) -- (78.6,126.4) ;
\draw [shift={(80.6,126.4)}, rotate = 180] [color={rgb, 255:red, 0; green, 0; blue, 0 }  ][line width=0.75]    (10.93,-3.29) .. controls (6.95,-1.4) and (3.31,-0.3) .. (0,0) .. controls (3.31,0.3) and (6.95,1.4) .. (10.93,3.29)   ;
%Straight Lines [id:da44419062508955887] 
\draw    (472.9,141.74) -- (516.5,141.74) ;
\draw [shift={(518.5,141.74)}, rotate = 180] [color={rgb, 255:red, 0; green, 0; blue, 0 }  ][line width=0.75]    (10.93,-3.29) .. controls (6.95,-1.4) and (3.31,-0.3) .. (0,0) .. controls (3.31,0.3) and (6.95,1.4) .. (10.93,3.29)   ;

% Text Node
\draw (94.4,67) node [anchor=north west][inner sep=0.75pt]  [font=\footnotesize] [align=left] {Estimator};
% Text Node
\draw (97,120) node [anchor=north west][inner sep=0.75pt]  [font=\footnotesize] [align=left] {Encoder};
% Text Node
\draw (276.6,120) node [anchor=north west][inner sep=0.75pt]  [font=\footnotesize] [align=left] {DMC};
% Text Node
\draw (414,120) node [anchor=north west][inner sep=0.75pt]  [font=\footnotesize] [align=left] {Decoder};
% Text Node
\draw (282,177) node [anchor=north west][inner sep=0.75pt]  [font=\footnotesize]  {$Ps$};
% Text Node
\draw (168,126) node [anchor=north west][inner sep=0.75pt]  [font=\footnotesize]  {$X_{t} =f_{i}^{t}\left( Y^{t-1}\right)$};

% Text Node
\draw (270.56,148.8) node [anchor=north west][inner sep=0.75pt]  [font=\footnotesize]  {$S_{t}$};
% Text Node
\draw (18,62) node [anchor=north west][inner sep=0.75pt]  [font=\footnotesize]  {$\hat{S}_{t-1}$};
% Text Node
\draw (350,126) node [anchor=north west][inner sep=0.75pt]  [font=\footnotesize]  {$Y_{t}$};
% Text Node
\draw (99,15) node [anchor=north west][inner sep=0.75pt]  [font=\footnotesize] [align=left] {Sender};
% Text Node
\draw (523,101) node [anchor=north west][inner sep=0.75pt]  [font=\footnotesize] [align=left] {is $\displaystyle \alpha '$ sent?\\Yes? or No?};
% Text Node
\draw (9,119) node [anchor=north west][inner sep=0.75pt]  [font=\footnotesize] [align=left] {$\displaystyle \alpha\in \mathcal{N}$};
% Text Node
\draw (522,135) node [anchor=north west][inner sep=0.75pt]  [font=\footnotesize] [align=left] {$\displaystyle m \neq m'$?};

\end{tikzpicture}

% \end{document}
}
        \caption{State-dependent channel with noiseless feedback.}
        \label{fig:System}
    \end{figure}
    The average block distortion for message pair \((\alpha,m)\) is defined as
\begin{equation}
\bar d^{(n)}_{\alpha,m}
:=
\frac{1}{n}\sum_{t=1}^n
\mathbb E\bigl[d(S_t,\hat S_t)\mid
(\alpha,m)\text{ is sent}\bigr].
\end{equation}

    We define an optimal estimator as follows.
\begin{equation}
\label{eq:hs}
h^\star(x,y)
\in
\arg\min_{\hat s\in{\mathcal S}}
\sum_{s\in\mathcal S}
P_{S|X,Y}(s|x,y)d(s,\hat s).
\end{equation}
If several minimizers exist, one of them is selected arbitrarily.
Here,
\[
P_{S|X,Y}(s|x,y)
=
\frac{P_S(s)W_S(y|x,s)}{\bar W(y|x)}
\]
whenever \(\bar W(y|x)>0\).
    
    We define the minimum distortion achieved by the optimal estimator associated with each input symbol
    \(x\in\mathcal X\) as
    \begin{equation}
    \label{eq:dstar}
    d^\star(x)
    \coloneqq
    \mathbb{E}\bigl[d(S,h^\star(x,Y))\mid X=x\bigr],
    \end{equation}
    and the corresponding minimum distortion associated with an input distribution
    \(P\in\mathcal P(\mathcal X)\) as
    \begin{equation}
    d^\star(P)
    \coloneqq \expectation[d^\star(X)]=
    \sum_{x\in\mathcal X}P(x)\,d^\star(x). \end{equation}

\subsection{Identification, Transmission, and Sensing}
  At the receiver, two tasks are performed jointly:
for a queried identity \(\alpha'\), it decides whether
\(\alpha'\) matches the sent identification message \(\alpha\)
and, if so, decodes the transmission message \(m\).

    \begin{definition}    
    An \((n,N,M,\lambda_1,\lambda_2,D)\) randomized ITS code for the
    state-dependent channel \(W_S\) consists of
        \[\left(
        \left\{
        \bigl(Q_F(\cdot|\alpha,m),\mathcal{D}_{\alpha,m}\bigr)
        :
        (\alpha,m)\in [N]\times[M]
        \right\},h
        \right),\]
        where \(Q_F(\cdot|\alpha,m)\in\mathcal{P}(\mathcal{F}_n)\) is an \(n\)-step
        randomized feedback strategy associated with the message pair
        \((\alpha,m)\), \(\mathcal{D}_{\alpha,m}\subseteq\mathcal{Y}^n\) is the
        corresponding decoding set satisfying
        \begin{equation}
        \mathcal{D}_{\alpha,m}\cap\mathcal{D}_{\alpha,m'}=\emptyset
        \end{equation}
        for all \(\alpha\in[N]\) and all \(m\neq m'\),
        and
        $h:\mathcal X^n\times\mathcal Y^n\to\mathcal S^n$
        is the state estimator. For every \(\alpha\in[N]\), define
        \begin{equation}
        \label{eq:Da}
        \mathcal{D}_{\alpha}
        \coloneqq
        \bigcup_{m\in[M]}\mathcal{D}_{\alpha,m}.
        \end{equation}
        The joint transmission and missed identification (type 1)
error probability satisfies
        \begin{align}
        % &\frac{1}{M}\sum_{m=1}^M
        % \sum_{s^n\in\mathcal{S}^n} P_S^n(s^n)
        % \sum_{f\in\mathcal{F}_n} Q_F(f|\alpha,m)
        % W_S^n\bigl(
        % \mathcal{D}_{\alpha,m}^c \,\big|\, f,s^n
        % \bigr)
        % \le \lambda_1
        &\frac{1}{M}\sum_{m=1}^M
        \sum_{f\in\mathcal{F}_n} Q_F(f|\alpha,m)
        \bar W ^n\bigl(
        \mathcal{D}_{\alpha,m}^c \,\big|\, f
        \bigr)
        \le \lambda_1
        \label{eq:its_cond1}
        \intertext{for all $\alpha\in[N]$, the false identification (type 2) error probability satisfies}
        % &\frac{1}{M}\sum_{m=1}^M
        % \sum_{s^n\in\mathcal{S}^n} P_S^n(s^n)
        % \sum_{f\in\mathcal{F}_n} Q_F(f|\alpha,m)
        % W_S^n\bigl(
        % \mathcal{D}_{\alpha'} \,\big|\, f,s^n
        % \bigr) \le \lambda_2
        &\frac{1}{M}\sum_{m=1}^M
        \sum_{f\in\mathcal{F}_n} Q_F(f|\alpha,m)
        \bar W ^n\bigl(
        \mathcal{D}_{\alpha'} \,\big|\, f
        \bigr) \le \lambda_2
        \label{eq:its_cond2}
        \end{align}
        for all $\alpha\neq\alpha'\in[N]$, and the maximal distortion of the code satisfies
        \begin{equation}
        \bar d^{(n)} \coloneqq
        \max_{(\alpha,m)\in[N]\times[M]}
        \bar d^{(n)}_{\alpha,m} \le D.
        \end{equation}
    \end{definition}

  \begin{definition}
A rate pair $(R_{\mathrm{ID}},R_{\mathrm T})$ is said to be
achievable under distortion constraint \(D\) for the
state-dependent channel \(W_S\) with randomized encoding
if there exists a sequence of
\((n,N_n,M_n,\lambda_{1,n},\lambda_{2,n},D_n)\)
randomized ITS codes such that
\begin{equation}
\lim_{n\to\infty}\lambda_{1,n}
=
\lim_{n\to\infty}\lambda_{2,n}=0,
\qquad
\limsup_{n\to\infty}D_n\le D,
\end{equation}
and
\begin{equation}
\liminf_{n\to\infty}\frac{1}{n}\log\log N_n
\ge R_{\mathrm{ID}},
\qquad
\liminf_{n\to\infty}\frac{1}{n}\log M_n
\ge R_{\mathrm T}.
\end{equation}
\end{definition}

    \begin{definition}%[Capacity--distortion region]
    The randomized ITS capacity-distortion region,
    denoted by
    $\mathcal{C}_{\mathrm{ID,T}}^{\mathrm r}(D)$,
    is defined as the set of all achievable rate pairs
    $(R_{\mathrm{ID}},R_{\mathrm T})$ under distortion constraint \(D\).
    
    For a fixed transmission rate \(R_{\mathrm T}\), the randomized
    ITS capacity-distortion function is defined as
    \begin{equation}
        C^r_{\mathrm{ID},\mathrm T}(R_{\mathrm T},D)
        :=
        \sup\left\{
        R_{\mathrm{ID}}:
        (R_{\mathrm {ID}},R_{\mathrm{T}})
        \in
        \mathcal C^r_{\mathrm{ID},\mathrm T}(D)
        \right\}.
    \end{equation}
    \end{definition}
    %We now characterize the randomized ITS capacity distortion tradeoff.
    We now characterize the ITS tradeoff.
    \begin{theorem}\label{thm:main}
  Assume that $C(\bar W):=
\max_{P\in\mathcal P(\mathcal X)}I(P;\bar W)>0$.
For every $D\ge D_{\min}:=\min_{x\in\mathcal X}d^\star(x)$,
the randomized ITS capacity-distortion region
of the state-dependent channel $W_S$ is given by
    \begin{equation}
    \begin{aligned}\label{eq:CDregion}
    \mathcal{C}_{\mathrm{ID,T}}^r(D)
    &=
    \bigcup_{P\in\mathcal P_{D}}
    \!\left\{
    (R_{\mathrm{ID}},R_{\mathrm T}):\right.
    \left.
   0\le R_{\mathrm T}\le I(P;\bar W),\,
0\le R_{\mathrm{ID}}\le H(P\bar W)
    \right\}.
    \end{aligned}
    \end{equation}
    where
    $\mathcal{P}_D
    =
    \left\{
    P\in\mathcal{P}(\mathcal{X}):
    d^\star(P)\le D
    \right\}$.
    
    Equivalently, for every
\(0\le R_{\mathrm T}\le \max_{P\in\mathcal P_D}I(P;\bar W)\),
    \[
    C_{\mathrm{ID,T}}^{\mathrm r}(R_{\mathrm T},D)
    =
    \max_{\substack{
    P\in\mathcal{P}_D:\\
    I(P;\bar W)\ge R_{\mathrm T}
    }}
    H(P\bar W).
    \]
    
    \end{theorem}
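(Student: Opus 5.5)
We prove the two inclusions separately, working throughout with the averaged channel $\bar W$. By \eqref{eq:state-averaging-identity}, the error criteria \eqref{eq:its_cond1}--\eqref{eq:its_cond2} depend on the code only through $\bar W^n(\cdot|f)$. Hence, apart from the distortion constraint, the problem is the randomized identification-transmission problem over the DMC $\bar W$ with noiseless feedback. The sensing part enters only through the estimator. We use the symbolwise estimator $\hat S_t=h^\star(X_t,Y_t)$. Since $S_t$ is independent of $(X_t,Y^{t-1})$ and the channel is memoryless, $S_t$ depends on the past only through $(X_t,Y_t)$ via $P_{S|X,Y}$. Therefore $\mathbb E[d(S_t,\hat S_t)]=\mathbb E[d^\star(X_t)]$, and this is also the minimum over all estimators $h$. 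In particular, for any code,
\[
\bar d^{(n)}_{\alpha,m}\ge \frac1n\sum_{t=1}^n \mathbb E\bigl[d^\star(X_t)\mid(\alpha,m)\bigr]=d^\star\bigl(\bar P_{\alpha,m}\bigr),
\]
where $\bar P_{\alpha,m}$ is the time-averaged input marginal. Equality holds when $h^\star$ is used. This single identity drives both directions.

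\textbf{Achievability.} Fix $P\in\mathcal P_D$ with $I(P;\bar W)>0$. Use the type $P_n\to P$ and, if necessary, perturb slightly toward a distortion-feasible mixture. Since $D\ge D_{\min}$ and $d^\star$ is linear, we can keep $d^\star(P_n)\le D+o(1)$.

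We follow the two-phase scheme of Venkatesan--Anantharam and Ahlswede--Dueck. In the first block of length $n$, transmit $m$ together with common randomness using a constant-composition code of type $P_n$ for $\bar W$ at rate close to $I(P;\bar W)$. The outputs $Y^n$ are seen by both parties via feedback. Their entropy is about $nH(P\bar W)$, and partitioning the typical set equiprobably converts this into roughly $nH(P\bar W)$ bits of uniform shared randomness. A second short block of length $o(n)$, or of length $\epsilon n$ with type $P$, sends a tag $T_\alpha(U)$, taken from a combinatorial array/code family, of that randomness. This gives $\log\log N\approx nH(P\bar W)$ with vanishing type-2 error.

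Because every transmitted codeword, including those in the tag phase, has composition close to $P$, the input actually sent satisfies $d^\star(\text{type})\le D+o(1)$ deterministically, and $h^\star$ achieves the bound. The corner points $(H(P\bar W),I(P;\bar W))$ are thus achievable, and the full region follows by time-sharing and rate reduction. The case $I(P;\bar W)=0$ is handled by continuity, mixing $P$ with a small amount of a capacity-achieving distribution; this uses $C(\bar W)>0$ together with closure of the region.

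\textbf{Converse.} Let $\bar P_n$ be the averaged input distribution over time and over $(\alpha,m)$. By the estimator identity, $d^\star(\bar P_n)\le D_n$, so every limit point $P$ lies in $\mathcal P_D$. For transmission, Fano's inequality on each $\mathcal D_\alpha$ gives $\log M\le \sum_t I(X_t;Y_t)+n\epsilon_n\le nI(\bar P_n;\bar W)+n\epsilon_n$, using feedback-invariance of memoryless capacity and concavity of $I$ in $P$. For identification, we invoke the Ahlswede--Dueck feedback converse applied to $\bar W^n(\cdot|f)$. The crux is that the same averaged distribution $\bar P_n$ must govern both the entropy bound $\tfrac1n\log\log N\lesssim \tfrac1n\sum_t H(Y_t)\le H(\bar P_n\bar W)$ and the mutual-information bound.

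\textbf{Main obstacle.} The standard feedback ID converse (image-size / $\epsilon$-typical output set argument) bounds $\log\log N$ by $\max_x$-type quantities over all strategies. We must instead refine it so that the output set's log-size is bounded by $\sum_t H(Y_t)$ computed under the code's own input statistics, uniformly over the $(\alpha,m)$ mixing. We would first try a strong-converse-style blowing-up argument, restricted to strategies whose induced input empirical distribution is close to $\bar P_n$, discarding the rest via Markov's inequality on the distortion. Continuity of $H$, $I$ and $d^\star$ then lets us pass to the limit.
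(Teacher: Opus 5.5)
Your achievability is essentially the paper's scheme: a constant-composition code of type $P_n$ with the auxiliary index $\ell$, equiprobable splitting of each decoding region, an array indexed by $(\alpha,m,\ell,k)$ so that $m$ also counts as randomness under the $m$-averaged type-2 criterion, a short tag, and $h^\star$ giving distortion $d^\star(P_n)$. One quibble: for $I(P;\bar W)=0$ no mixing is needed, since $M_n=L_n=1$ works and $C(\bar W)>0$ is used only for the tag. Mixing with a capacity-achieving law can push $d^\star$ above $D$, and then needs a diagonal argument with vanishing weight.

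The genuine gap is in the converse. Your transmission and distortion bounds are fine. The missing piece is the identification bound $\frac1n\log\log N\lesssim H(P\bar W)$ for the \emph{same} $P$ that satisfies $I(P;\bar W)\gtrsim R_{\mathrm T}$ and $d^\star(P)\le D$; this is the entire substance of the converse, and you only assert it and label it the main obstacle. The repair you sketch would not work. Markov's inequality on the expected distortion only gives $d^\star(\hat P_{X^n})\le D+\eta$ with probability roughly $\eta/(D+\eta)$, and says nothing about $\hat P_{X^n}$ being close to $\bar P_n$. Indeed, a code that randomizes between codewords of two types $P_1\neq P_2$ has $\bar P_n=(P_1+P_2)/2$, while no realization has empirical type near it, so restricting to such strategies can discard all the mass. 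A blowing-up argument is also unnecessary, since only a weak converse is claimed.

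The paper closes the gap with Lemma~\ref{lem:dist_trim_count}. Each $\mathcal D_{\alpha,m}$ is trimmed to outputs $y^n$ whose input $x_t=f_t(y^{t-1})$ is conditionally typical with $\hat P_{x^n}\in\mathcal P_{D+\eta}$ and $I(\hat P_{x^n};\bar W)>R_{\mathrm T}-\delta$. By the Venkatesan--Anantharam bound (which exploits disjointness in $m$) and Markov, the retained mass is at least $1-\lambda_{1,n}-\frac1{M_n}2^{n(R_{\mathrm T}-\delta)+o(n)}-\frac{D_n}{D+\eta}-o(1)$. This stays bounded away from zero (by about $\eta/(D+\eta)$), so it eventually exceeds $\lambda_{2,n}$, which is all that distinctness of the trimmed sets requires. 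Type counting followed by subset counting then yields $R_{\mathrm{ID}}\le\max\{H(P\bar W):P\in\mathcal P_{D+\eta},\,I(P;\bar W)\ge R_{\mathrm T}-\delta\}$.

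Alternatively, your own idea becomes rigorous, and simpler, if you work per $\alpha$ instead of with the global $\bar P_n$. Let $P_{Y^n|\alpha}$ be the output law given $\alpha$ and choose $\lambda_{1,n}+\lambda_{2,n}<\mu_n\to0$. Markov applied to $-\log P_{Y^n|\alpha}(Y^n)$ shows that $T_\alpha:=\{y^n:P_{Y^n|\alpha}(y^n)\ge 2^{-H(Y^n|\alpha)/(1-\mu_n)}\}$ has mass at least $\mu_n$. Hence $\mathcal D_\alpha\cap T_\alpha$ retains mass at least $\mu_n-\lambda_{1,n}>\lambda_{2,n}$, and these sets are pairwise distinct. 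By subadditivity and concavity, $H(Y^n|\alpha)\le nH(\bar P_\alpha^{(n)}\bar W)$, so $\log\log N\le n\max_\alpha H(\bar P_\alpha^{(n)}\bar W)/(1-\mu_n)+O(\log n)$. Since Fano and $d^\star(\bar P_\alpha^{(n)})\le D_n$ hold for every $\alpha$, the maximizing $\alpha$ supplies the common $P$.
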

    
    \begin{remark}
  The randomized ITS capacity-distortion region exhibits a
single-letter separation structure: the communication rates
are determined by the averaged channel $\bar W$, subject to
the sensing-induced input constraint $P\in\mathcal P_D$.
    If the distortion is unconstrained, i.e., $\mathcal P_{D}=\mathcal P(\mathcal X)$,
    the region reduces to the classical randomized identification-transmission
    region with feedback~\cite{venkatesan2002identification} for $\bar W$.
    \end{remark}

\section{Auxiliary Lemmas}

    In this section, we collect several auxiliary lemmas that will be used in the proof of the main theorem.

    The following lemma shows that the symbolwise estimator \(h^\star\) provides a lower bound on the distortion achieved by any estimator.
    \begin{lemma}
    \label{lem:dist_lb}
    Consider any randomized $(n,N,M,\lambda_1,\lambda_2,D)$ ITS code and fix
    $(\alpha,m)\in[N]\times[M]$. Let
    \begin{equation}
    \begin{aligned}
    \label{eq:Pbar}
    P_{X_t|\alpha,m}(x)
    &:= \Pr\{X_t=x \mid \alpha,m\}, \\
    \bar P_{\alpha,m}^{(n)}
    &:= \frac{1}{n}\sum_{t=1}^n P_{X_t|\alpha,m}.
    \end{aligned}
    \end{equation}
    Then
    \begin{equation}
    \bar d_{\alpha,m}^{(n)}
    \ge
    d^\star\bigl(\bar P_{\alpha,m}^{(n)}\bigr).
    \end{equation}
    \end{lemma}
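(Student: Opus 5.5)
Fix a message pair $(\alpha,m)$ and a time index $t$; we lower-bound $\mathbb E[d(S_t,\hat S_t)\mid\alpha,m]$ by $\sum_x P_{X_t|\alpha,m}(x)\,d^\star(x)$. Averaging over $t$ then gives the claim, because $d^\star(\cdot)$ is linear in the input distribution:
\[
\frac1n\sum_{t=1}^n\sum_x P_{X_t|\alpha,m}(x)\,d^\star(x)=d^\star\bigl(\bar P^{(n)}_{\alpha,m}\bigr).
\]
The per-symbol bound is the main step.

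First we identify the conditional law of $S_t$. Under the randomized strategy $Q_F(\cdot|\alpha,m)$, the input $X_t=f_t(Y^{t-1})$ is a function of $(F,Y^{t-1})$. The state $S_t$ is independent of $(F,Y^{t-1})$: $S^n$ is i.i.d., independent of $F$, and $Y^{t-1}$ depends only on $S^{t-1}$ and $F$. Hence, conditioned on $(\alpha,m)$ and $(F,Y^{t-1},X_t)=(f,y^{t-1},x)$, the pair $(S_t,Y_t)$ has law $P_S(s)W_S(y|x,s)$. Since the channel is memoryless, the future outputs $Y_{t+1}^n$ and inputs $X_{t+1}^n$ depend on $S_t$ only through $Y_t$ (and on $S_{t+1}^n$, which is independent). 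Consequently, conditioned on the full observation $(X^n,Y^n)$ together with $F$, the posterior of $S_t$ is $P_{S|X,Y}(\cdot|X_t,Y_t)$. To verify this, write the joint probability as a product:
\[
P_S(s_t)\,W_S(y_t|x_t,s_t)
\]
times factors that do not involve $s_t$ once $y_t$ is fixed, and after summing over the other states, normalize.

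Next we apply the optimality of $h^\star$. The estimate $\hat S_t$ is the $t$-th coordinate of $h(X^n,Y^n)$, a function of the observation. Conditioning on $(F,X^n,Y^n)$ and using the tower property gives
\[
\mathbb E\bigl[d(S_t,\hat S_t)\mid\alpha,m\bigr]=\mathbb E\Bigl[\sum_s P_{S|X,Y}(s|X_t,Y_t)\,d(s,\hat S_t)\Bigm|\alpha,m\Bigr].
\]
By the definition \eqref{eq:hs} of $h^\star$ as a pointwise minimizer, the inner sum is at least
\[
\sum_s P_{S|X,Y}(s|X_t,Y_t)\,d\bigl(s,h^\star(X_t,Y_t)\bigr).
\]
This quantity depends only on $(X_t,Y_t)$. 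Taking expectations, with $Y_t\sim\bar W(\cdot|X_t)$ given $X_t$ (again by independence of $S_t$ from $X_t$), yields $\mathbb E[d^\star(X_t)\mid\alpha,m]$. Outcomes with $\bar W(y|x)=0$ have probability zero and can be ignored.

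The expected obstacle is the posterior step. Because of feedback, the estimator sees future inputs $X_{t+1}^n$, which are functions of $Y_t$. We must argue carefully that these carry no extra information about $S_t$ beyond $(X_t,Y_t)$. The factorization argument handles this: including $F$ in the conditioning and then integrating it out preserves the inequality, since the lower bound holds pointwise for every conditioning value.
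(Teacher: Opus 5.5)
Your proposal is correct and follows the same route as the paper. The posterior of $S_t$ given the full observation reduces to $P_{S|X,Y}(\cdot|X_t,Y_t)$, pointwise optimality of $h^\star$ then gives $\mathbb{E}[d^\star(X_t)\mid\alpha,m]$, and averaging over $t$ with linearity of $d^\star$ finishes the proof; the only difference is that you justify the posterior step (conditioning on the realized strategy $F$ and factorizing the joint law), which the paper simply asserts from the i.i.d. states and causal feedback.
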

   \begin{proof}
Fix $(\alpha,m)$ and $t\in[n]$.
Since the states are i.i.d. and the encoder uses causal feedback,
the posterior distribution of $S_t$ given $(X^n,Y^n)$
depends only on $(X_t,Y_t)$.
Thus, by the optimality of $h^\star$,
    \[
    \begin{aligned}
    \mathbb{E}&[d(S_t,\hat S_t)|\alpha,m]
    \ge
    \mathbb{E}[d(S_t,h^\star(X_t,Y_t))|\alpha,m]  \\
    &=
    \sum_{x\in\mathcal X} P_{X_t|\alpha,m}(x)
    \mathbb{E}[d(S_t,h^\star(x,Y_t))|X_t=x,\alpha,m] 
    =
    \sum_{x\in\mathcal X} P_{X_t|\alpha,m}(x)d^\star(x)
    =
    d^\star(P_{X_t|\alpha,m}).
    \end{aligned}
    \]
    %Averaging over $t=1,\ldots,n$ and using the linearity of $d^\star(P)$ in $P$ gives
    Averaging over \(t\) and using the linearity of \(d^\star(P)\) in
\(P\in\mathcal P(\mathcal X)\) gives
    %{\small
    \begin{align}
    \bar d_{\alpha,m}^{(n)} \ge
    \frac{1}{n}\sum_{t=1}^n d^\star(P_{X_t|\alpha,m}) \notag
    =
    d^\star\!\left(\frac{1}{n}\sum_{t=1}^n P_{X_t|\alpha,m}\right)
    =
    d^\star(\bar P_{\alpha,m}^{(n)}).
    \end{align}
    %}
    \end{proof}
    %\subsection{Equitype Transmission Code with Partition Property}
    
    The following lemma provides the constant-composition transmission code
    construction used in the achievability proof for the averaged channel
    \(\bar W\).
    
    \begin{lemma}[{\cite[Lemma~3.1]{venkatesan2002identification}, adapted to \(\bar W\)}]%[Equitype transmission code with partition property]
    \label{lem:equitype_partition}
    Fix an input distribution
    $P\in\mathcal{P}(\mathcal{X})$
    and let $P_n$ be a sequence of $n$-types such that
    $P_n\to P$.
    Let
    \[
    R' < I(P;\bar{W}),
    \qquad
    R'' < H(\bar{W}\mid P).
    \]
    Then, for all sufficiently large $n$, there exist:
    
    \begin{itemize}
        \item a family of codewords $c_{m,\ell}\in\mathcal{X}^n$, $(m,\ell)\in[M_n]\times[L_n]$, all of type $P_n$, satisfying \[\frac{1}{n}\log(M_nL_n)\le R',\]
    
        \item a partition \(\{C_{m,\ell}\}\) of \(\mathcal Y^n\) such that
        \(\bar W^n(C_{m,\ell}^{c}\mid       c_{m,\ell})\le \alpha_n\) for all
        \((m,\ell)\), with \(\alpha_n\to0\) exponentially fast,
    
       \item for every $(m,\ell)$, a further partition
        $C_{m,\ell}=C_{m,\ell,e}\,\dot\cup\,
        \bigl(\dot\bigcup_{k=1}^{K_n}C_{m,\ell,k}\bigr)$,
        for any positive integer $K_n$ satisfying
        \[\frac1n\log K_n\le R'',\qquad
        \bar W^n(C_{m,\ell,e}\mid c_{m,\ell})\le\beta_n,\]
        and
        \[
        \sum_{k=1}^{K_n}
        \left|
        \bar W^n(C_{m,\ell,k}\mid c_{m,\ell})-\frac1{K_n}
        \right|\le\delta_n,
        \]
        with $\beta_n,\delta_n\to0$ exponentially fast.
        \end{itemize}
        \end{lemma}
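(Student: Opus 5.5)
The statement is Lemma~3.1 of~\cite{venkatesan2002identification} applied to the averaged channel $\bar W$. The plan is to observe that $\bar W$ is an ordinary DMC with finite alphabets, so the state plays no further role, and then reproduce that construction in two layers: a constant-composition transmission code, and then an equiprobable splitting of each decoding set using conditional type classes.

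\emph{Step 1 (transmission layer).} Fix $\varepsilon>0$ with $R'<I(P;\bar W)-2\varepsilon$. Since $P_n\to P$ and $I(\cdot;\bar W)$ is continuous, $I(P_n;\bar W)>R'+\varepsilon$ for large $n$. I would invoke the Csiszár--Körner maximal-code (or packing) lemma for constant-composition codes. It gives codewords $c_{m,\ell}$, all of type $P_n$, with $\frac1n\log(M_nL_n)$ as close to $R'$ as desired and not exceeding it. It also gives disjoint decoding sets inside the $\bar W$-typical shells, with maximal error at most $2^{-nE}$ for some $E>0$. Assigning the leftover outputs arbitrarily to one decoding set turns these sets into a partition $\{C_{m,\ell}\}$ of $\mathcal Y^n$ without increasing any error, so we may take $\alpha_n=2^{-nE}$.

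\emph{Step 2 (splitting layer).} Fix $(m,\ell)$ and write $c=c_{m,\ell}$. Decompose $\mathcal Y^n$ into the V-shells $T_V(c)$, one for each conditional type $V$. All sequences in a given shell have the same probability under $\bar W^n(\cdot|c)$. There are at most $(n+1)^{|\mathcal X||\mathcal Y|}$ shells. Call a shell \emph{good} if $V$ is $\eta$-close to $\bar W$, so that $|T_V(c)|\ge 2^{n(H(\bar W|P)-\varepsilon)}$, and if in addition $\bar W^n(T_V(c)\cap C_{m,\ell}|c)\ge 2^{-n\gamma}\,\bar W^n(T_V(c)|c)$. Because sequences within a shell are equiprobable, the second condition forces $|T_V(c)\cap C_{m,\ell}|\ge 2^{-n\gamma}|T_V(c)|\ge 2^{n(H(\bar W|P)-\varepsilon-\gamma)}$.

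Put every non-good shell's intersection with $C_{m,\ell}$ into $C_{m,\ell,e}$. By the same equiprobability, the mass of $C_{m,\ell,e}$ is at most the atypical-shell mass (exponentially small) plus polynomially many terms each $\le 2^{-n\gamma}$. This gives $\beta_n$.

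Now split each good set $T_V(c)\cap C_{m,\ell}$ into $K_n$ pieces whose cardinalities differ by at most one. Let $C_{m,\ell,k}$ be the union over good shells of the $k$-th pieces. Within a shell each piece carries probability $\frac{1}{K_n}\bar W^n(T_V\cap C_{m,\ell}|c)$ up to a relative error of at most $K_n/|T_V(c)\cap C_{m,\ell}|\le 2^{-n(H(\bar W|P)-R''-\varepsilon-\gamma)}$. This is exponentially small once $\varepsilon,\gamma$ are chosen so that $R''<H(\bar W|P)-\varepsilon-\gamma$, using the continuity of $H(\bar W|P_n)$.

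\emph{Step 3 (the $\delta_n$ estimate).} Summing over shells, we get $\bar W^n(C_{m,\ell,k}|c)=\frac{1}{K_n}(1-\beta'_n)(1\pm\text{exp.\ small})$, where $1-\beta'_n$ is the total good mass and satisfies $\beta'_n\le\alpha_n+\beta_n$. Hence $\sum_k|\bar W^n(C_{m,\ell,k}|c)-1/K_n|\le \alpha_n+\beta_n+\text{exp.\ small}=:\delta_n$.

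The main obstacle I expect is Step~2. A decoding set $C_{m,\ell}$ can cut a typical shell very unevenly, so equal-cardinality splitting only yields near-uniform probabilities once we know the intersection is large. The thresholding of shells at relative mass $2^{-n\gamma}$, combined with the polynomial bound on the number of shells, is what resolves this, and the exponents must be balanced so that $R''$ fits below the resulting cardinality exponent.
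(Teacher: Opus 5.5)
Your proof is correct. Step~1 matches the paper, which just treats $\bar W$ as an ordinary DMC and takes a constant-composition code of type $P_n$ with exponentially small maximal error, citing \cite[Lemma~3.1]{venkatesan2002identification}. Your splitting step, however, follows a different route.

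The paper's sketch uses no shell decomposition. It restricts $C_{m,\ell}$ to the conditionally $\bar W$-typical outputs of $c_{m,\ell}$. It notes that each retained sequence has probability at most $2^{-n(H(\bar W|P)-\varepsilon)}$, which is exponentially smaller than $1/K_n$ because $R''<H(\bar W|P)$. It then groups the retained sequences by \emph{probability mass} into $K_n$ parts, each within one atom of its target. The $\ell_1$ deviation is therefore of order $K_n2^{-n(H(\bar W|P)-\varepsilon)}$ plus the discarded mass, and how $C_{m,\ell}$ cuts the type classes never matters.

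You instead split by \emph{cardinality} inside each equiprobable $V$-shell. This buys an explicit, shell-symmetric partition and directly gives a per-piece multiplicative bound $\bar W^n(C_{m,\ell,k}|c)=\frac{1-\beta_n'}{K_n}(1\pm\rho_n)$, uniformly in $k$. That is stronger than the $\ell_1$ statement the lemma needs, and it costs you the conditional-type bookkeeping.

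The obstacle you single out is self-imposed, and the threshold $2^{-n\gamma}$ can be dropped. A piece of $T_V(c)\cap C_{m,\ell}$ deviates from its target by at most one atom $p_V\le 2^{-n(H(\bar W|P)-\varepsilon)}$ in absolute terms. Hence each $C_{m,\ell,k}$ deviates from $(1-\beta_n')/K_n$ by at most $\sum_{V\ \mathrm{good}}p_V$. The total deviation is then at most $K_n\sum_{V\ \mathrm{good}}p_V\le (n+1)^{|\mathcal X||\mathcal Y|}2^{-n(H(\bar W|P)-R''-\varepsilon)}$, however thin the intersections are. This even preserves the per-piece multiplicative bound. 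You needed the threshold only for a per-shell relative error bound, which the lemma never requires. Once you measure error additively, your argument reduces to the paper's max-atom argument.
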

    
    %\subsection{Array Lemma}
    
    The following combinatorial lemma is used to construct the identification
    component of the coding scheme.
    
  \begin{lemma}[{\cite[Lemma~3.2]{venkatesan2002identification}}]
\label{lem:array}
Fix $\varepsilon\in(0,1)$ and an integer $J\ge 2/\varepsilon$. Then there
exists $c_{\varepsilon}>0$ such that, if
$\log N_n \le c_{\varepsilon} S_n$ for a positive integer $S_n$, then, for
all sufficiently large $n$, there exists an array
$A:[N_n]\times[S_n]\to[J]$,
whose any two distinct rows have Hamming distance larger than
$(1-\varepsilon)S_n$.
\end{lemma}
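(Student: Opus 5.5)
The plan is a random-coding argument with a union bound over pairs of rows, which is the standard Gilbert--Varshamov-type route; no previous lemma of the paper is needed. I would draw the $N_n$ rows $A(i,\cdot)\in[J]^{S_n}$ independently and uniformly at random. Then I would show that, with positive probability, every pair of distinct rows agrees in fewer than $\varepsilon S_n$ coordinates. This is equivalent to the required Hamming distance $>(1-\varepsilon)S_n$. If $N_n=1$ there is nothing to prove, so I assume $N_n\ge 2$.

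First, I would fix $i\neq i'$ and count the agreements $Z:=\bigl|\{j\in[S_n]: A(i,j)=A(i',j)\}\bigr|$. The coordinates are independent, and in each one the two rows agree with probability exactly $1/J$. Hence $Z$ is $\mathrm{Binomial}(S_n,1/J)$, with mean $S_n/J\le \varepsilon S_n/2$ by the hypothesis $J\ge 2/\varepsilon$. By the Chernoff bound, in its binary-divergence form,
\[
\Pr\{Z\ge \varepsilon S_n\}\le 2^{-S_n D(\varepsilon\,\|\,1/J)}\le 2^{-S_n D(\varepsilon\,\|\,\varepsilon/2)} .
\]
The second inequality holds because $q\mapsto D(\varepsilon\|q)$ is decreasing on $(0,\varepsilon]$ and $1/J\le\varepsilon/2<\varepsilon$. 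The constant $\eta_\varepsilon:=D(\varepsilon\|\varepsilon/2)$ is strictly positive and depends only on $\varepsilon$.

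Next, I would take the union bound over the fewer than $N_n^2$ unordered pairs. This gives
\[
\Pr\{\exists\, i\neq i': Z_{i,i'}\ge\varepsilon S_n\}<N_n^2\,2^{-S_n\eta_\varepsilon}=2^{2\log N_n-S_n\eta_\varepsilon}.
\]
I would choose $c_\varepsilon:=\eta_\varepsilon/3$. Then $\log N_n\le c_\varepsilon S_n$ makes the exponent at most $-S_n\eta_\varepsilon/3<0$, so the probability is strictly less than $1$. Some realization of the array therefore has all pairwise agreement counts $<\varepsilon S_n$, that is, pairwise Hamming distances $>(1-\varepsilon)S_n$. The argument in fact works for every $S_n$, so the qualifier "for all sufficiently large $n$" is satisfied automatically.

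The only delicate step is obtaining a constant $c_\varepsilon$ that depends on $\varepsilon$ alone and not on $J$. This is why I bound $D(\varepsilon\|1/J)$ below by $D(\varepsilon\|\varepsilon/2)$ using the monotonicity of the divergence. One must also keep the strict inequality $Z<\varepsilon S_n$ so that the distance bound is strict; using the Chernoff tail for the event $Z\ge\varepsilon S_n$ handles this.
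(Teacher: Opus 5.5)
Your proof is correct. Drawing the rows uniformly, bounding the $\mathrm{Binomial}(S_n,1/J)$ agreement count with the divergence form of the Chernoff bound, using $D(\varepsilon\|1/J)\ge D(\varepsilon\|\varepsilon/2)$, and taking a union bound over pairs gives the lemma with the explicit constant $c_\varepsilon=D(\varepsilon\|\varepsilon/2)/3$, valid for every $S_n$ and independent of $J$.

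The paper gives no proof of this lemma and simply imports it from Venkatesan--Anantharam, so there is no argument in the paper to diverge from. Your random-selection (Gilbert--Varshamov-type) argument is the standard proof of such array lemmas. Your explicit constant also helps the paper's later diagonal step with $\varepsilon_n\downarrow 0$. For small $\varepsilon$ it is roughly $0.09\,\varepsilon$, so $\log N_n\le c_{\varepsilon_n}B_n$ continues to hold as long as $\varepsilon_n$ decays subexponentially.
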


The approximate equiprobability follows by restricting
each decoding region to conditionally typical output
sequences and partitioning their probability masses
into $K_n$ nearly equal parts. Since
$R''<H(\bar W|P)$, the maximum probability of
an individual retained output sequence is
exponentially smaller than $1/K_n$.
Consequently, the total deviation from the uniform
distribution vanishes exponentially.
    
    In the achievability proof, \Cref{lem:equitype_partition,lem:array} are used with
    %\begin{equation}
     $ S_n=M_nL_nK_n.$  
   % \end{equation}
    %The following lemma states the converse bound for randomized
    %identification transmission over a discrete memoryless channel with
    %noiseless feedback and serves as the main ingredient of the converse proof.
    %\begin{lemma}%[Converse for randomized identification--transmission over a DMC]
    %\label{lem:IT_converse}
    %Consider a DMC \(W:\mathcal X\to\mathcal Y\) with noiseless feedback.
    %For every sequence of randomized identification--transmission codes with
    %blocklength \(n\), transmission size \(M\), identification size \(N\), and
    %vanishing error probabilities, there exists an input distribution
    %\(Q_n\in\mathcal P(\mathcal X)\) such that
    %$\frac1n\log M
    %\le
    %I(Q_n;W)+\delta_n$,
    %and
    %$\frac1n\log\log N
    %\le
    %H(Q_nW)+\delta_n$,
    %where \(\delta_n\to0\).
    %\end{lemma}
    
    The next lemma is a distortion-constrained version of the trimming and
    counting argument for identification with feedback, following
    Venkatesan and Anantharam~\cite{venkatesan2002identification}.
    \begin{lemma}%[\cite{venkatesan2002identification}]
    \label{lem:dist_trim_count}
    Consider the averaged DMC \(\bar W:\mathcal X\to\mathcal Y\) with noiseless
    feedback. Fix \(\alpha\in[N]\). For each \(m\in[M]\), let
    \[Q_F(\cdot|\alpha,m)\in\mathcal P(\mathcal F_n)\] be the randomized feedback
    strategy used by the encoder. Assume that the induced input distributions satisfy
\[d^\star(\bar P_{\alpha,m}^{(n)})\le D_n\]
for all $m\in[M]$.
Fix $\eta>0$ with $D+\eta>0$ and set
$D'=D+\eta$.
    Let \(\hat P_{x^n}\) denote the empirical distribution of \(x^n\).

    For a deterministic feedback function \(f=(f_1,\ldots,f_n)\in\mathcal F_n\),
    define \[\mathcal E(f)\subseteq\mathcal X^n\times\mathcal Y^n\] as the set of all
    pairs \((x^n,y^n)\) satisfying
    \[x_t=f_t(y^{t-1}),\qquad t=1,\ldots,n,\]
    and
    \[\left|
    N(a,b|x^n,y^n)
    -
    n\hat P_{x^n}(a)\bar W(b|a)
    \right|
    \le
    n^{3/4}\bar W(b|a),\] 
    $(a,b)\in\mathcal X\times\mathcal Y$.
    
    For \(y^n\in\mathcal Y^n\), define the section
    \[
    \mathcal E(f;y^n)
    :=
    \{x^n\in\mathcal X^n:(x^n,y^n)\in \mathcal E(f)\}.
    \]     
    For \(\gamma\ge 0\), define
    \begin{equation}
    \begin{aligned}
    B_{\gamma,D'}(f)
    :=
    \Bigl\{
    y^n\in\mathcal Y^n:\;
    \exists\,x^n\in \mathcal E(f;y^n)
    \ \text{s.t.} 
    \hat P_{x^n}\in\mathcal P_{D'}
    \ \text{and}\
    I(\hat P_{x^n};\bar W)>\gamma
    \Bigr\}.
    \end{aligned}
    \end{equation}
    and let
    $B_{\gamma,D'}^{c}(f)
    :=
    \mathcal Y^n\setminus B_{\gamma,D'}(f)$.
    For the randomized feedback strategy \(Q_F(\cdot|\alpha,m)\), define
    \begin{equation}
    B_{\gamma,D'}\bigl(Q_F(\cdot|\alpha,m)\bigr)
    :=
    \bigcup_{f:Q_F(f|\alpha,m)>0} B_{\gamma,D'}(f),
    \end{equation}
    Then
    \begin{equation}
    \begin{aligned}
    &\left|
    \bigcup_{m=1}^M
    \bigl(\mathcal D_{\alpha,m}\cap B_{\gamma,D'}(Q_F(\cdot\mid\alpha,m))\bigr)
    \right| 
    \le
    2^{
n\left[
\max_{\substack{
P\in\mathcal P_{D'}:\\
I(P;\bar W)\ge\gamma
}}
H(P\bar W)
\right]
+o(n)
}.
    \end{aligned}
  \label{eq:union-bound-cardinality}
\end{equation}
Moreover, the following bound holds:
   \begin{equation}
\begin{aligned}
    &\frac1M\sum_{m=1}^M\sum_f Q_F(f|\alpha,m)
    \bar W^n\bigl(
    \mathcal D_{\alpha,m}\cap B_{\gamma,D+\eta}^c(f)
    \mid f\bigr)
    \le
\frac1M 2^{n\gamma+o(n)}
+\frac{D_n}{D'}+o(1).
\end{aligned}
\label{eq:trim-bound}
\end{equation}
\end{lemma}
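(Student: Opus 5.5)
We prove the two bounds separately, following the trimming-and-counting argument of Venkatesan and Anantharam for the averaged channel $\bar W$ with feedback. The distortion constraint enters only through the restriction $\hat P_{x^n}\in\mathcal P_{D'}$ in the definition of $B_{\gamma,D'}$, and through a Markov-type bound on the input compositions that violate it.

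\emph{Cardinality bound \eqref{eq:union-bound-cardinality}.} Every $y^n$ in the union lies in $B_{\gamma,D'}(f)$ for some $f$. Hence there is an $x^n$ with $(x^n,y^n)\in\mathcal E(f)$, $\hat P_{x^n}\in\mathcal P_{D'}$ and $I(\hat P_{x^n};\bar W)>\gamma$. The defining inequality of $\mathcal E(f)$ says that the joint type of $(x^n,y^n)$ is within $n^{-1/4}$ of $\hat P_{x^n}\times\bar W$. So the type of $y^n$ is within $O(n^{-1/4})$ of $\hat P_{x^n}\bar W$.

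We therefore bound the union by the set of all $y^n$ whose type $Q$ is within $O(n^{-1/4})$ of $P\bar W$ for some $n$-type $P\in\mathcal P_{D'}$ with $I(P;\bar W)\ge\gamma$. This bound holds uniformly over $m$ and over $f$; the feedback structure plays no role. There are at most $(n+1)^{|\mathcal X|}$ input types and $(n+1)^{|\mathcal Y|}$ output types, and each output type class has size at most $2^{nH(Q)}$. By uniform continuity of $H$ on $\mathcal P(\mathcal Y)$, $H(Q)\le H(P\bar W)+o(1)$. Taking the maximum over admissible $P$ gives the bound, with all polynomial factors absorbed into $2^{o(n)}$.

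\emph{Trimming bound \eqref{eq:trim-bound}.} Fix $m$ and $f$, and split the event $\{Y^n\in\mathcal D_{\alpha,m}\cap B^c_{\gamma,D'}(f)\}$ according to $X^n=f(Y^n)$. There are three cases.

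\begin{enumerate}
\item $(X^n,Y^n)\notin\mathcal E(f)$. The conditional counts $N(a,b|X^n,Y^n)$ form a martingale-type sum of independent-given-past Bernoulli increments with means $\bar W(b|a)$. Chebyshev or Azuma then bounds the probability by $O(n^{-1/2})=o(1)$, uniformly in $f$.
\item $(X^n,Y^n)\in\mathcal E(f)$ but $\hat P_{X^n}\notin\mathcal P_{D'}$, i.e.\ $d^\star(\hat P_{X^n})>D'$. Since $\mathbb E[d^\star(\hat P_{X^n})]=d^\star(\bar P^{(n)}_{\alpha,m})\le D_n$ by linearity of $d^\star$, after averaging over $f\sim Q_F(\cdot|\alpha,m)$ Markov's inequality gives probability at most $D_n/D'$. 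This is why the averaging over $f$ is taken before the bound.
\item $(X^n,Y^n)\in\mathcal E(f)$, $\hat P_{X^n}\in\mathcal P_{D'}$, and $Y^n\notin B_{\gamma,D'}(f)$. Then necessarily $I(\hat P_{X^n};\bar W)\le\gamma$.
\end{enumerate}

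For the third case we count. For fixed $f$ and fixed conditional type $V$ with $I(P;V)\le\gamma+o(1)$, the probability $\bar W^n(y^n|f)$ of such a $y^n$ equals $2^{-n(H(V|P)+D(V\|\bar W|P))}\le 2^{-n(H(Y|X)-o(1))}$ on $\mathcal E(f)$. So any set $A$ of such $y^n$ has probability at most $|A|\,2^{-n(H(\hat P\bar W)-\gamma)+o(n)}$.

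The key step, following Venkatesan and Anantharam, bounds the quantity summed over $m$ rather than individually. Because the $\mathcal D_{\alpha,m}$ are disjoint in $m$, the total $\sum_m\sum_f Q_F(f|\alpha,m)\bar W^n(\mathcal D_{\alpha,m}\cap G(f)|f)$ can be controlled by $2^{n\gamma+o(n)}$ over good sets $G(f)$. The argument groups outputs by type: each output type class has size $\approx 2^{nH(\hat P\bar W)}$, while each $y^n$ in case (iii) has probability $\lesssim 2^{-n(H(\hat P\bar W)-\gamma)}$. Summing over the disjoint $\mathcal D_{\alpha,m}$ within a type class therefore gives at most $2^{n\gamma}$ per type, and the polynomial number of types is absorbed. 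Dividing by $M$ yields $\frac1M2^{n\gamma+o(n)}$.

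The main obstacle is making case (iii) rigorous under feedback. Different $m$ use different strategies $f$, so probabilities $\bar W^n(y^n|f)$ for distinct $f$ must be bounded uniformly by a function of the joint type alone. I would use the product form $\bar W^n(y^n|f)=\prod_{a,b}\bar W(b|a)^{N(a,b|x^n,y^n)}$ with $x^n=f(y^n)$. This shows the probability depends only on the joint type of $(f(y^n),y^n)$, and that on $\mathcal E(f)$ it is $2^{-nH(\bar W|\hat P)+o(n)}$. The $I\le\gamma$ gap then converts to $2^{-n(H(\hat P\bar W)-\gamma)}$ per output sequence.
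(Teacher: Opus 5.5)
Your proposal is correct and follows essentially the same route as the paper. The cardinality bound uses the same output-type counting, via $\|\hat P_{y^n}-\hat P_{x^n}\bar W\|_\infty\le|\mathcal X|n^{-1/4}$ and continuity of entropy, and the trimming bound uses the same split into atypicality, Markov's inequality applied to $d^\star(\hat P_{X^n})$, and the low-mutual-information part.

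The only difference is that you carry out the Venkatesan--Anantharam counting for the third part instead of citing \cite[Lemma~4.2]{venkatesan2002identification}. When you write it up, state the per-sequence bound as $\bar W^n(y^n|f)\le 2^{-n(H(\hat P_{y^n})-\gamma)+o(n)}$. This depends on $y^n$ only through its output type, not through the $f$-dependent joint type you mention, so it is uniform in $f$ and $m$; disjointness of the $\mathcal D_{\alpha,m}$ and summation over output type classes then give $2^{n\gamma+o(n)}$ directly.
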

    % \begin{equation}
    % \color{orange}
    % \begin{aligned}
    %     &P_M Q_F
    %     \bar W^n\bigl(\mathcal D_{\alpha,M}\cap B_{\gamma,D'}^c(f)\mid \alpha\bigr)
    %     \le \frac1M\mathrm{e}^{n\gamma+o(n)}+o(1).
    % \end{aligned}
    % \end{equation}

    \begin{proof}
    % The bound in \eqref{eq:trim-bound} follows directly from  of
    % Venkatesan and Anantharam~\cite{venkatesan2002identification}, .
    % It remains to prove \eqref{eq:union-bound-cardinality}. 
    This is an adaptation of~\cite[Lemma~4.2]{venkatesan2002identification} to add the distortion constraint.
    We first clarify the role of the randomized feedback strategy. The encoder
    uses a randomized strategy \(Q_F(\cdot|\alpha,m)\), i.e., it first selects a
    deterministic feedback function \(f\in\mathcal F_n\). Conditioned on this
    realization \(f\), the input sequence is causally generated from the output
    sequence through \(x_t=f_t(y^{t-1})\). %Hence, as in \cite[Lemma~4.2]{venkatesan2002identification}, the typicality condition is imposed on the joint pair \((x^n,y^n)\).
    In the counting argument we count
    output sequences \(y^n\), but only those for which there exists an input
    sequence \(x^n\in \mathcal E(f;y^n)\).
    
    If \(y^n\in B_{\gamma,D'}\bigl(Q_F(\cdot|\alpha,m)\bigr)\), then, by definition,
    there exist \(f\in\mathcal F_n\) with \(Q_F(f|\alpha,m)>0\) and
    \(x^n\in \mathcal E(f;y^n)\) such that
    $\hat P_{x^n}\in\mathcal P_{D'}$, 
    $I(\hat P_{x^n};\bar W)>\gamma$.
    Let \(Q=\hat P_{y^n}\). Since \((x^n,y^n)\in \mathcal E(f)\), for every
    \(b\in\mathcal Y\), we have
\[
    \left|Q(b)-(\hat P_{x^n}\bar W)(b)\right|
    % &=
    % \left|
    % \sum_{a\in\mathcal X}
    % \left(
    % \frac{1}{n}N(a,b|x^n,y^n)
    % -
    % \hat P_{x^n}(a)\bar W(b|a)
    % \right)
    % \right| \\
    \le
    \sum_{a\in\mathcal X} n^{-1/4}\bar W(b|a) 
    \le |\mathcal X|n^{-1/4}.
\]
    Therefore, by continuity of entropy,
    $H(Q)
    \le
    H(\hat P_{x^n}\bar W)+o(1)$.
    Since \(\hat P_{x^n}\in\mathcal P_{D'}\) and
    \(I(\hat P_{x^n};\bar W)>\gamma\), it follows that
    \[H(Q)
    \le
    \max_{\substack{
    P\in\mathcal P_{D'}:\\
    I(P;\bar W)\ge\gamma
    }}
    H(P\bar W)
    +o(1).\]
   Since the entropy bound is uniform over all $m$ and $f$,
every sequence \(y^n\) in
    $\bigcup_{m=1}^M
    \bigl(\mathcal D_{\alpha,m}\cap
    B_{\gamma,D'}(Q_F(\cdot\mid\alpha,m))\bigr)$
    has a type \(Q\) whose entropy is at most
    $\max_{\substack{
    P\in\mathcal P_{D'}:\\
    I(P;\bar W)\ge\gamma
    }}
    H(P\bar W)+o(1)$.
   Since the number of \(n\)-types on \(\mathcal Y\) is polynomial in \(n\), and
each type class has cardinality at most \(2^{nH(Q)}\), we obtain
\eqref{eq:union-bound-cardinality}.

    To prove \eqref{eq:trim-bound}, first omit the distortion
constraint from $B_{\gamma,D'}(f)$ and denote the resulting
set by $B_\gamma(f)$.
%Then

{
The complement \(B_{\gamma,D'}^c(f)\) can occur if the input-output pair is atypical, if the empirical input distribution violates the distortion threshold \(D'\), or if its empirical mutual information does not exceed \(\gamma\).

Hence,
}
\[
\begin{aligned}
&\bar W^n(\mathcal D_{\alpha,m}
\cap B_{\gamma,D'}^c(f)\mid f)
\le
\bar W^n(\mathcal D_{\alpha,m}
\cap B_\gamma^c(f)\mid f)
\Pr\{d^\star(\hat P_{X^n})>D'\mid f\}
+\Pr\{(X^n,Y^n)\notin\mathcal E(f)\mid f\}.
\end{aligned}
\]
Since the decoding sets $\mathcal D_{\alpha,m}$ are
pairwise disjoint for fixed $\alpha$, the first term,
averaged over $m$ and $f$, is bounded by
$\frac1M 2^{n\gamma+o(n)}+o(1)$
according to \cite[Lemma~4.2]{venkatesan2002identification}.
{Moreover,
\begin{align*}
&\frac1M\sum_{m=1}^M\sum_f Q_F(f|\alpha,m)\mathbb E\!\left[d^\star(\hat P_{X^n})\mid f,\alpha,m\right]
=\frac1M\sum_{m=1}^M d^\star(\bar P_{\alpha,m}^{(n)})
\le D_n.
\end{align*}
Hence, averaging Markov’s inequality over \(m\) and \(f\) gives
}
% \sout{For the second term, Markov's inequality gives}
\[
\frac1M\sum_m\sum_f Q_F(f\mid\alpha,m)
\Pr\{d^\star(\hat P_{X^n})>D'\mid f\}
\le \frac{D_n}{D'}.
\]
The atypicality probability is $o(1)$ uniformly
over all feedback strategies. Combining the
bounds proves \eqref{eq:trim-bound}.
    \end{proof}

\section{Proof of Theorem~\ref{thm:main}}

\subsection{Code Construction}

% We first consider the case $I(P^\star;\bar W)=0$ and
% $H(P^\star\bar W)>0$, where necessarily $R_{\mathrm T}=0$.
% Set $M_n=L_n=1$ and choose a codeword $x^n$ of type
% $P_n\to P^\star$. Since
% $H(\bar W|P^\star)=H(P^\star\bar W)$, the output
% sequence can be partitioned into
% $K_n=\lfloor 2^{n(H(P^\star\bar W)-\eta)}\rfloor$
% approximately equiprobable sets, with an exceptional
% set of vanishing probability.
% Applying Lemma~\ref{lem:array} with $S_n=K_n$
% and transmitting the corresponding identification tag
% over $t_0=o(n)$ additional channel uses yields every
% $R_{\mathrm{ID}}<H(P^\star\bar W)$.
% The tag can be transmitted reliably since $C(\bar W)>0$.
% As the second phase has vanishing relative length,
% the asymptotic distortion is at most $d^\star(P^\star)\le D$.

  Fix an arbitrary distribution \(P^\star\in\mathcal P_{D}\).
We first consider the case $I(P^\star;\bar W)=0$ and
$H(P^\star\bar W)>0$. Here, only $R_{\mathrm T}=0$
is possible. For any $R_{\mathrm{ID}}<H(P^\star\bar W)$,
choose \[0<\eta<H(P^\star\bar W)-R_{\mathrm{ID}}.\]
Set $M_n=L_n=1$ and choose a codeword
$x^n$ of type $P_n\to P^\star$.
Since \[H(\bar W|P^\star)=H(P^\star\bar W),\]
the conditional AEP for the fixed codeword $x^n$
implies that its typical output sequences have
probabilities at most
$2^{-n(H(P^\star\bar W)-\eta/2)}$
for all sufficiently large $n$.
Hence, they can be partitioned into
\[K_n=\lfloor 2^{n(H(P^\star\bar W)-\eta)}\rfloor\]
approximately equiprobable sets, with an exceptional
set of probability at most $\beta_n\to0$, and
total deviation from the uniform distribution
at most $\delta_n\to0$.
Fix $\varepsilon\in(0,1)$ and an integer $J\ge 2/\varepsilon$.
Choose $N_n=\lfloor 2^{2^{nR_{\mathrm{ID}}}}\rfloor$.
Since \[\log N_n\le c_\varepsilon K_n\] for all
sufficiently large $n$, Lemma~\ref{lem:array}
provides an identification array
\[A:[N_n]\times[K_n]\to[J].\]
The feedback reveals the partition index $k$ to the
encoder, which transmits the tag $A(\alpha,k)$
using a length-$t_0$ transmission code, where
$t_0=\lceil\sqrt n\rceil$.
Since $J$ is fixed and $C(\bar W)>0$, such a code
exists with maximal error probability
$\epsilon_n^{\mathrm{tag}}\to0$.
Letting $\eta\downarrow0$ yields every
$R_{\mathrm{ID}}<H(P^\star\bar W)$.
The first-kind error is at most
$\beta_n+\epsilon_n^{\mathrm{tag}}$, while the
second-kind error is at most
$\beta_n+\delta_n+\epsilon_n^{\mathrm{tag}}+\varepsilon$.
Thus, both errors vanish by subsequently letting
$\varepsilon\downarrow0$.
As the second phase has vanishing relative length,
the asymptotic distortion is at most $d^\star(P^\star)\le D$.

We now consider $I(P^\star;\bar W)>0$ and a rate pair
$(R_{\mathrm{ID}},R_{\mathrm T})$ satisfying
\begin{equation}
\label{eq:RIDRT}
R_{\mathrm{ID}}<H(P^\star\bar W),
\qquad
R_{\mathrm T}<I(P^\star;\bar W).
\end{equation}
Choose $\eta>0$ sufficiently small such that
\[
R_{\mathrm{ID}}<H(P^\star\bar W)-3\eta,
\qquad
R_{\mathrm T}<I(P^\star;\bar W)-3\eta,
\]
and, if $H(\bar W|P^\star)>0$, also
$\eta<H(\bar W|P^\star)$.
    Let \(P_n\) be a sequence of \(n\)-types such that \(P_n\to P^\star\), define
    \begin{equation}
   M_n:=\lfloor 2^{nR_{\mathrm T}}\rfloor,\qquad
L_n:=\left\lfloor
2^{n(I(P^\star;\bar W)-R_{\mathrm T}-2\eta)}
\right\rfloor
    \end{equation}
    By Lemma~\ref{lem:equitype_partition} when
$H(\bar W|P^\star)>0$, and by standard
constant-composition channel coding otherwise,
there exist, for all sufficiently large $n$, codewords
    \(c_{m,\ell}\in T_{P_n}\), where \(T_{P_n}\) denotes the type class in
    \(\mathcal X^n\) with empirical distribution \(P_n\), and pairwise disjoint
    decoding regions \(C_{m,\ell}\subseteq\mathcal Y^n\),
    \((m,\ell)\in[M_n]\times[L_n]\), satisfying
   \[\bar W^n(C_{m,\ell}^c|c_{m,\ell})\le\alpha_n,\] for $\alpha_n\to0$.
 If $H(\bar W|P^\star)>0$, Lemma~\ref{lem:equitype_partition}
provides a further partition of each decoding region,
\[C_{m,\ell}
=
C_{m,\ell,e}
\;\dot\cup\;
\left(
\dot\bigcup_{k=1}^{K_n} C_{m,\ell,k}
\right),\]
where we set
\[
K_n:=\left\lfloor
2^{n(H(\bar W|P^\star)-\eta)}
\right\rfloor.
\]
In this case, the sets $C_{m,\ell,k}$ are approximately
equiprobable under $\bar W^n(\cdot|c_{m,\ell})$, while
\[\bar W^n(C_{m,\ell,e}|c_{m,\ell})\le\beta_n\]
with $\beta_n\to0$.

If $H(\bar W|P^\star)=0$, we instead set
$K_n=1$, $C_{m,\ell,1}=C_{m,\ell}$,
and $C_{m,\ell,e}=\emptyset$.
    
    Define
    \[B_n:=M_nL_nK_n.\] Then
    % \[
    % \frac1n\log B_n
    % \le
    % R_{\mathrm T}
    % +
    % I(P^\star;\bar W)-R_{\mathrm T}-2\eta
    % +
    % H(\bar W|P^\star)-\eta .
    % \]
    % Hence
  \[
\frac1n\log B_n
=
\begin{cases}
H(P^\star\bar W)-3\eta+o(1),
& H(\bar W|P^\star)>0,\\
H(P^\star\bar W)-2\eta+o(1),
& H(\bar W|P^\star)=0.
\end{cases}
\]
    
  Fix $\varepsilon\in(0,1)$ and an integer $J\ge 2/\varepsilon$.
Let \(N_n:=\left\lfloor 2^{2^{nR_{\mathrm{ID}}}}\right\rfloor\).
Since \[R_{\mathrm{ID}}<H(P^\star\bar W)-3\eta,\]
we have \(\log N_n\le c_\varepsilon B_n\) for all sufficiently
large \(n\). Thus, Lemma~\ref{lem:array} yields an array
\begin{equation}
A:[N_n]\times[M_n]\times[L_n]\times[K_n]\to[J],
\end{equation}
such that any two distinct identification rows have Hamming distance
larger than $(1-\varepsilon)B_n$.
    
   Finally, set \(t_0=\lceil\sqrt n\rceil\).
Since \(I(P^\star;\bar W)>0\), there exists a length-\(t_0\)
transmission code
$\{(u_j,\widetilde D_j):j\in[J]\}$
for the averaged channel \(\bar W\) such that
\[\bar W^{t_0}(\widetilde D_j^c\mid u_j)
\le \epsilon_n^{\mathrm{tag}}\]
for all $j\in[J]$, where
$\epsilon_n^{\mathrm{tag}}\to0$.
    
    The total blocklength is \(n+t_0\).
    
    The two-phase encoder defined below induces a randomized feedback strategy
$Q_F(\cdot\mid\alpha,m)\in\mathcal P(\mathcal F_{n+t_0})$
for every message pair \((\alpha,m)\in[N_n]\times[M_n]$.
The randomization consists of the uniform choice of $\ell$.
    
    \paragraph{Phase I (transmission and output randomization)}
    
    To send a message pair
    $(\alpha,m),$
    the encoder chooses
    $\ell\in[L_n]$
    uniformly at random and transmits the first-phase codeword
    $c_{m,\ell}\in\mathcal X^n$.
    After Phase I, the encoder observes the noiseless feedback sequence \(y^n\).
    If
    $y^n\in C_{m,\ell,k}$
    for some \(k\in[K_n]\), the encoder records this index \(k\). If instead
    $y^n\in C_{m,\ell,e}$,
    an exceptional event is declared.
    
    \paragraph{Phase II (identification tag transmission)}
    
    If no exceptional event occurs in Phase I, the encoder computes
    $j=A(\alpha,m,\ell,k),$
    and transmits the second-phase tag codeword \(u_j\) over \(t_0\) channel uses.
    If the exceptional event occurs, the encoder transmits an arbitrary fixed
    fallback tag codeword.
    
    \paragraph{Decoding}
    
    Let the total channel output be written as
    $(y^n,\tilde y^{t_0})\in\mathcal Y^n\times\mathcal Y^{t_0}$.
    From the first-phase output \(y^n\), the receiver first finds the unique pair
    $(\hat m,\hat\ell)\in[M_n]\times[L_n]$
    such that
    $y^n\in C_{\hat m,\hat\ell}$.
  If $y^n\in C_{\hat m,\hat\ell,\hat k}$
for some \(\hat k\in[K_n]\), the receiver keeps the index \(\hat k\).
Otherwise, it declares a first-phase failure.
    
    From the second-phase output \(\tilde y^{t_0}\), the receiver finds the unique
    \(\hat j\in[J]\) such that
    $\tilde y^{t_0}\in\widetilde D_{\hat j}$.
    
    For every $(\alpha,m)$, define
    \begin{equation}
    \mathcal D_{\alpha,m}
    :=
    \bigcup_{\ell=1}^{L_n}
    \bigcup_{k=1}^{K_n}
    \left(
    C_{m,\ell,k}
    \times
    \widetilde D_{A(\alpha,m,\ell,k)}
    \right).
   \end{equation}
Thus, identity \(\alpha\) accepts and outputs transmission message \(m\)
    precisely when
    $(y^n,\tilde y^{t_0})\in\mathcal D_{\alpha,m}$.
    For fixed \(\alpha\), the sets \(\mathcal D_{\alpha,m}\) are pairwise disjoint
    in \(m\), because the first-phase decoding sets \(C_{m,\ell}\) are pairwise
    disjoint.
    
\subsection{Achievability}
    We show that the constructed code is an
\((n+t_0,N_n,M_n,\lambda_{1,n},\lambda_{2,n},D_n)\)
randomized ITS code satisfying
\[\lambda_{1,n},\lambda_{2,n}\to0,\qquad
\limsup_{n\to\infty}D_n\le D.\]
    \paragraph{Error probability of the first kind}
    Fix $(\alpha,m)$ and suppose that $(\alpha,m)$ is sent.
    An error of the first kind can occur only if
    $y^n\notin C_{m,\ell}$, or $y^n\in C_{m,\ell,e}$, or the second-phase tag is
    decoded incorrectly. Hence
\[\sum_f Q_F(f|\alpha,m)
\bar W^{n+t_0}(\mathcal D_{\alpha,m}^c|f)
\le \alpha_n+\beta_n+\epsilon_n^{\mathrm{tag}}.\]
    % The same bound holds for the original
    % state-dependent channel averaged over the i.i.d. states, i.e., \rf{is this needed? maybe just remove} \vg{text or equation? I kept a short statement here to make explicit that the bound for the averaged channel also applies to the original state-dependent channel after averaging over the i.i.d. state sequence, using \eqref{eq:state-averaging-identity}.} \rf{both, this next equation is literally substituting the definition, there is no conceptual difference between $\bar W$ and the averaged state-dependent channel, they are the same thing by definition, so this to me sounds like just repeating the same statement twice.}
    % \[
    % \begin{aligned}
    % &\sum_{s^{n+t_0}} P_S^{n+t_0}(s^{n+t_0})
    % \sum_f Q_F(f\mid \alpha,m)\,
    % W_S^{n+t_0}\!\left(\mathcal D_{\alpha,m}^{c}\mid f,s^{n+t_0}\right) \\
    % &\qquad\le \epsilon_n+\beta_n+\epsilon_1 .
    % \end{aligned}
    % \]
   Averaging over $m$ and taking the maximum over $\alpha$ gives
\[\lambda_{1,n}\le \alpha_n+\beta_n+\epsilon_n^{\mathrm{tag}}.\]
    \paragraph{Error probability of the second kind}
    
    Fix \(\alpha,\alpha'\in[N_n]\) with \(\alpha\neq\alpha'\), and suppose that
    \((\alpha,m)\) is sent. Receiver \(\alpha'\) can falsely accept only if either
    a first-phase decoding error occurs, a second-phase tag decoding error occurs,
    or the correct indices \((m,\ell,k)\) are recovered and nevertheless
    $A(\alpha,m,\ell,k)=A(\alpha',m,\ell,k)$.
    
  For fixed \(m\), let \(E_m(\alpha,\alpha')\) denote the number of pairs
$(\ell,k)\in[L_n]\times[K_n]$
such that
$A(\alpha,m,\ell,k)=A(\alpha',m,\ell,k)$.
By Lemma~\ref{lem:equitype_partition} when
$H(\bar W|P^\star)>0$, and by $K_n=1$ otherwise,
\[
\begin{aligned}
\frac{1}{M_n}\sum_{m=1}^{M_n}
\sum_f Q_F(f\mid\alpha,m)
\bar W^{n+t_0}(\mathcal D_{\alpha'}\mid f)
\le
\alpha_n+\beta_n+\epsilon_n^{\mathrm{tag}}
+\frac{1}{M_n}\sum_{m=1}^{M_n}
\frac{E_m(\alpha,\alpha')}{L_nK_n}
+\delta_n.
\end{aligned}
\]
By the Hamming-distance property of Lemma~\ref{lem:array},
\[
\frac{1}{M_n}\sum_{m=1}^{M_n}
\frac{E_m(\alpha,\alpha')}{L_nK_n}
\le\varepsilon.
\]
Consequently,
\[
\lambda_{2,n}\le
\alpha_n+\beta_n+\epsilon_n^{\mathrm{tag}}
+\delta_n+\varepsilon.
\]
    
    \paragraph{Sensing distortion}
    
    The sensing requirement is verified using the symbolwise estimator
    \begin{equation}
    \hat S_t=h^\star(X_t,Y_t), \qquad t=1,\ldots,n+t_0 .
    \end{equation}
    First consider Phase I. Since every first-phase codeword \(c_{m,\ell}\) has
    type \(P_n\), for every \[x\in\mathcal X,\qquad
    \frac1n\sum_{t=1}^n
    \mathbf 1\{c_{m,\ell,t}=x\}
    =
    P_n(x).\]
    
    Conditioned on the first-phase codeword \(c_{m,\ell}\), the average sensing
    distortion in Phase I is
    \[\frac1n\sum_{t=1}^n
    \mathbb E\bigl[d(S_t,\hat S_t)\mid X_t=c_{m,\ell,t}\bigr].\]
    By \eqref{eq:dstar}, this equals
    \begin{equation}
    \frac1n\sum_{t=1}^n d^\star(c_{m,\ell,t})
    =
    \sum_{x\in\mathcal X}P_n(x)d^\star(x)
    =
    d^\star(P_n).
    \end{equation}
    Since \(P_n\to P^\star\) and \(d^\star(\cdot)\) is linear in the input
    distribution, we have
    $d^\star(P_n)\to d^\star(P^\star)$.
 Since \(P^\star\in\mathcal P_D\), we have
$d^\star(P^\star)\le D$.
    
    It remains to account for Phase II. Since $t_0=o(n)$,
its contribution to the normalized block distortion vanishes.
Let
\begin{equation}
d_{\max}:=
\max_{(s,\hat s)\in\mathcal S\times\mathcal S}
d(s,\hat s)<\infty.
\end{equation}
Then the average distortion for every message pair over the complete block satisfies
\begin{equation}
\bar d^{(n+t_0)}
\le D_n:=
\frac{n}{n+t_0}d^\star(P_n)
+
\frac{t_0}{n+t_0}d_{\max}.
\end{equation}
Consequently,
\[
\limsup_{n\to\infty}D_n
=d^\star(P^\star)\le D.
\]
    %It remains to account for Phase II. Since Phase II has fixed length \(t_0\),
    %its contribution to the normalized block distortion vanishes as
    %\(n\to\infty\). More precisely, let
    %\[
    %d_{\max}:=
    %\max_{(s,\hat s)\in\mathcal S\times\hat{\mathcal S}}
    %d(s,\hat s)<\infty .
    %\]
    %For every transmitted message pair \((\alpha,m)\), the average distortion
    %over the complete block of length \(n+t_0\) satisfies
    %\[
    %\bar d_{\alpha,m}^{(n+t_0)}
    %\le
    %\frac{n}{n+t_0}d^\star(P_n)
    %+
    %\frac{t_0}{n+t_0}d_{\max}.
    %\]
    %Indeed, the first-phase distortion equals \(d^\star(P_n)\) for every
    %\((\alpha,m)\), since all first-phase codewords \(c_{m,\ell}\) have type
    %\(P_n\), while the second phase contributes at most \(d_{\max}\) per symbol.
    %Therefore,
    %\[
    %\bar d^{(n+t_0)}
    %=
    %\max_{(\alpha,m)}
    %\bar d_{\alpha,m}^{(n+t_0)}
    %\le
    %\frac{n}{n+t_0}d^\star(P_n)
    %+
    %\frac{t_0}{n+t_0}d_{\max}.
    %\]
    %Taking the limit superior as \(n\to\infty\), we obtain
    %\[
    %\limsup_{n\to\infty}\bar d^{(n+t_0)}
    %\le
    %d^\star(P^\star)
    %\le D.
    %\]

    For every fixed $\varepsilon>0$, the construction gives
\[\limsup_{n\to\infty}\lambda_{2,n}\le\varepsilon,\]
while $\lambda_{1,n}\to0$.
Choosing $\varepsilon=\varepsilon_n\downarrow0$ sufficiently slowly
such that $\log J_n=o(t_0)$, and applying a diagonal
argument yields a sequence of codes with
$\lambda_{1,n},\lambda_{2,n}\to0$.
    
    \paragraph{Rates.}
    
    By construction,
    \[\frac1{n+t_0}\log M_n\to R_{\mathrm T}.\]
    Moreover,
    $N_n=\left\lfloor 2^{2^{nR_{\mathrm{ID}}}}\right\rfloor$,
    and hence
    \[\frac1{n+t_0}\log\log N_n\to R_{\mathrm{ID}}.\]
    Thus, for $I(P^\star;\bar W)>0$, every rate pair satisfying
\eqref{eq:RIDRT} is achievable under distortion constraint $D$.
The case $I(P^\star;\bar W)=0$ and
$H(P^\star\bar W)>0$ was addressed separately above.
If $H(P^\star\bar W)=0$, both rates are zero and
achievability is trivial.

Since \(P^\star\in\mathcal P_{D}\) was arbitrary, the two
constructions establish achievability for all interior rate pairs
and the corresponding pure identification rates.
Approximating the boundary rates from below and applying
a diagonal argument yields the remaining rate pairs.
The zero-rate pair is trivially achievable.
Thus,
    \[\bigcup_{P\in\mathcal{P}_D}
    \left\{
    (R_{\mathrm{ID}},R_{\mathrm T}):
    R_{\mathrm{T}}\le I(P;\bar{W}),\;
    R_{\mathrm{ID}}\le H(P\bar{W})
    \right\}
    \subseteq
    \mathcal{C}_{\mathrm{ID,T}}^{\mathrm{r}}(D).\]

\subsection{Converse Proof}
    Let $(R_{\mathrm{ID}},R_{\mathrm T})$ be achievable under
distortion constraint $D$.
By Definition~2, there exists a sequence of
$(n,N_n,M_n,\lambda_{1,n},\lambda_{2,n},D_n)$
randomized ITS codes
$\left\{
\left(Q_F(\cdot|\alpha,m),\mathcal D_{\alpha,m}\right)
\right\}$
such that
$\lambda_{1,n},\lambda_{2,n}\to0$,
$\limsup_{n\to\infty}D_n\le D$, and
\[
\liminf_{n\to\infty}\frac1n\log M_n\ge R_{\mathrm T},
\qquad
\liminf_{n\to\infty}\frac1n\log\log N_n
\ge R_{\mathrm{ID}}.
\]

    For every $(\alpha,m)\in[N_n]\times[M_n]$, with \(\bar P_{\alpha,m}^{(n)}\) defined in Lemma~\ref{lem:dist_lb}, 
    \eqref{eq:Pbar}, %Lemma~\ref{lem:dist_lb}
    gives
  \[d^\star\bigl(\bar P_{\alpha,m}^{(n)}\bigr)
\le
\bar d_{\alpha,m}^{(n)}
\le
\bar d^{(n)}
\le D_n.\]
Thus,
$\bar P_{\alpha,m}^{(n)}\in\mathcal P_{D_n}$.

For each fixed $\alpha$, let
\[
\bar P_\alpha^{(n)}
:=
\frac1{M_n}\sum_{m=1}^{M_n}
\bar P_{\alpha,m}^{(n)}.
\]
By linearity of $d^\star$, we have
$d^\star(\bar P_\alpha^{(n)})\le D_n$.
Since the decoding sets are disjoint for fixed $\alpha$,
Fano's inequality and the feedback channel converse yield
% \[
% \begin{aligned}
% (1-\lambda_{1,n})\log M_n
% &\le I(M;Y^n\mid\alpha)+1\\
% &\le \sum_{t=1}^n I(X_t;Y_t\mid\alpha)+1\\
% &\le nI(\bar P_\alpha^{(n)};\bar W)+1.
% \end{aligned}
% \]
{
\[
\begin{aligned}
(1-\lambda_{1,n})\log M_n
\le I(M;Y^n|\alpha)+1
=\sum_{t=1}^n I(M;Y_t|Y^{t-1},\alpha)+1\\
\le \sum_{t=1}^n I(X_t;Y_t|Y^{t-1},\alpha)+1
\le \sum_{t=1}^n I(X_t;Y_t|\alpha)+1
\le nI(\bar P_\alpha^{(n)};\bar W)+1.
\end{aligned}
\]}

Define
\[
\bar P^{(n)}
:=
\frac1{N_n}\sum_{\alpha=1}^{N_n}
\bar P_\alpha^{(n)}.
\]
By linearity of $d^\star$ and concavity of mutual
information,
\[
\begin{aligned}
d^\star(\bar P^{(n)})&\le D_n,\qquad
(1-\lambda_{1,n})\frac{\log M_n}{n}
\le I(\bar P^{(n)};\bar W)+\frac1n.
\end{aligned}
\]

By compactness and continuity, these bounds imply
$R_{\mathrm T}\le\max_{P\in\mathcal P_D}I(P;\bar W)$.

We first consider $R_{\mathrm T}=0$.
Fix $\eta>0$ and set $D'=D+\eta$.
For each feedback strategy $f$, define
\[
B_{D'}^0(f):=
\{y^n:\exists x^n\in\mathcal E(f;y^n),
\ d^\star(\hat P_{x^n})\le D'\}.
\]
Define
\[
B_{D'}^0(Q_F):=
\bigcup_{f:Q_F(f)>0}B_{D'}^0(f)
\]
and, for each $\alpha\in[N_n]$,
\[
\mathcal D_\alpha^{0,\star}:=
\bigcup_{m=1}^{M_n}
\left(
\mathcal D_{\alpha,m}\cap
B_{D'}^0(Q_F(\cdot\mid\alpha,m))
\right).
\]
By conditional typicality and Markov's inequality,
these trimmed regions retain probability at least
\[
1-\lambda_{1,n}-\frac{D_n}{D'}-o(1).
\]
This is bounded away from zero for sufficiently
large $n$. Since $\lambda_{2,n}\to0$, these regions
are pairwise distinct.
Moreover, for every $y^n\in B_{D'}^0(f)$,
conditional typicality and continuity of entropy give
\[
H(\hat P_{y^n})
\le \max_{P\in\mathcal P_{D'}}H(P\bar W)+o(1).
\]
Since the number of output types is polynomial in $n$,
the type-counting argument of
Lemma~\ref{lem:dist_trim_count} yields
\[
R_{\mathrm{ID}}
\le \max_{P\in\mathcal P_{D'}}H(P\bar W).
\]
Letting $\eta\downarrow0$ gives
\[
R_{\mathrm{ID}}
\le \max_{P\in\mathcal P_D}H(P\bar W).
\]
We now consider $R_{\mathrm T}>0$.
    
 For $R_{\mathrm T}>0$, fix $0<\delta<R_{\mathrm T}$
and $\eta>0$, and set
\begin{equation}
\gamma:=R_{\mathrm T}-\delta,
\qquad D':=D+\eta.
\end{equation}
    For each \(\alpha\in[N_n]\), define the trimmed decoding region
    \begin{equation}
    \mathcal D_\alpha^\star
    :=
    \bigcup_{m=1}^{M_n}
    \left(
    \mathcal D_{\alpha,m}
    \cap
    B_{\gamma,D'}(Q_F(\cdot\mid\alpha,m))
    \right),
    \end{equation}
    where \(B_{\gamma,D'}(f)\) is defined in Lemma~\ref{lem:dist_trim_count}.
    
    % \vd{Converse gap: The average distortion constraint only implies
% E[d^\star(\hat P_{X^n})] <= D_n. It does not guarantee that
% \hat P_{X^n} belongs to \mathcal P_{D+\eta} with probability
% tending to one. The trimming estimate must therefore be
% justified for randomized strategies under an expected
% distortion constraint.}

By Lemma~\ref{lem:dist_trim_count},
\[
\begin{aligned}
&\frac1{M_n}\sum_{m=1}^{M_n}
\sum_f Q_F(f|\alpha,m)
\bar W^n\bigl(
\mathcal D_{\alpha,m}\cap B_{\gamma,D'}^c(f)
\mid f\bigr)
\le
\frac1{M_n}2^{n\gamma+o(n)}
+\frac{D_n}{D'}+o(1).
\end{aligned}
\]
   Since
\[
\liminf_{n\to\infty}\frac1n\log M_n
\ge R_{\mathrm T}
\]
and $\gamma=R_{\mathrm T}-\delta$, we have
\[
\frac1{M_n}2^{n\gamma+o(n)}\to0.
\]
    Indeed,
    {
    \begin{align*}
    &\frac1{M_n}\sum_{m=1}^{M_n}
    \sum_f Q_F(f|\alpha,m)
    \bar W^n(\mathcal D_\alpha^\star|f)
    \nonumber\\
    % &=
    % \frac1{M_n}\sum_{m=1}^{M_n}
    % \sum_f Q_F(f|\alpha,m)
    % \bar W^n\Bigl(\bigcup
    % \mathcal D_{\alpha,m}
    % \cap B_{\gamma,D'}(Q_F(\cdot|\alpha,m))
    % \mid f
    % \Bigr)
    % \nonumber\\
    &\ge
    \frac1{M_n}\sum_{m=1}^{M_n}
    \sum_f Q_F(f|\alpha,m)
    \bar W^n\Bigl(
    \mathcal D_{\alpha,m}
    \cap B_{\gamma,D'}(Q_F(\cdot|\alpha,m))
    \mid f
    \Bigr)
    \nonumber\\
    &=
    \frac1{M_n}\sum_{m=1}^{M_n}
    \sum_f Q_F(f|\alpha,m)
    \bar W^n(\mathcal D_{\alpha,m}|f)
    \nonumber\\
    &
    -
    \frac1{M_n}\sum_{m=1}^{M_n}
    \sum_f Q_F(f|\alpha,m)
    \bar W^n\Bigl(
    \mathcal D_{\alpha,m}
    \cap B_{\gamma,D'}^c(Q_F(\cdot|\alpha,m))
    \mid f
    \Bigr)
    \nonumber\\
   &\ge 1-\lambda_{1,n}
-\frac{D_n}{D'}-o(1).
    \end{align*}
    }
   where the last step follows from the first-kind error constraint,
Lemma~\ref{lem:dist_trim_count}, and the inclusion
\[B_{\gamma,D'}^c(Q_F(\cdot\mid\alpha,m))
\subseteq B_{\gamma,D'}^c(f)\]
for $Q_F(f\mid\alpha,m)>0$.
    
 For \(\alpha'\neq\alpha\), since
\(\mathcal D_{\alpha'}^\star\subseteq\mathcal D_{\alpha'}\), the second-kind
error constraint gives
\[
\frac1{M_n}\sum_{m=1}^{M_n}
\sum_f Q_F(f|\alpha,m)
\bar W^n(\mathcal D_{\alpha'}^\star|f)
\le \lambda_{2,n}.
\]
Since $\limsup_{n\to\infty}D_n\le D$ and
$\lambda_{1,n},\lambda_{2,n}\to0$, we have
\[
\liminf_{n\to\infty}
\left(
1-\lambda_{1,n}-\frac{D_n}{D'}-o(1)
\right)
\ge 1-\frac{D}{D'}
=\frac{\eta}{D+\eta}>0.
\]
Thus, for all sufficiently large $n$, the retained
probability exceeds $\lambda_{2,n}$.
Hence, the sets $\mathcal D_\alpha^\star$,
$\alpha\in[N_n]$, are pairwise distinct.
    
  By Lemma~\ref{lem:dist_trim_count}, each trimmed set
satisfies $|\mathcal D_\alpha^\star|\le K_n$, where
    \begin{equation}
   K_n
:=
2^{
n\left[
\max_{\substack{P\in\mathcal P_{D'}:\\ I(P;\bar W)\ge R_{\mathrm T}-\delta}}
H(P\bar W)
\right]
+o(n)
}.
    \end{equation}
   Since the \(N_n\) sets \(\mathcal D_\alpha^\star\) are distinct subsets of
\(\mathcal Y^n\) and satisfy \(|\mathcal D_\alpha^\star|\le K_n\), we have
\[
N_n\le
\sum_{j=0}^{\min\{\lfloor K_n\rfloor,|\mathcal Y|^n\}}
\binom{|\mathcal Y|^n}{j}.
\]
    Using
\[
\sum_{j=0}^{\min\{\lfloor K_n\rfloor,|\mathcal Y|^n\}}
\binom{|\mathcal Y|^n}{j}
\le
2^{K_n(n\log|\mathcal Y|+1)},
\]
   we obtain
\[
\frac1n\log\log N_n
\le
\max_{\substack{P\in\mathcal P_{D'}:\\
I(P;\bar W)\ge R_{\mathrm T}-\delta}}
H(P\bar W)+o(1).
\]
Combining this with the achievable identification rate gives
\[
R_{\mathrm{ID}}
\le
\max_{\substack{P\in\mathcal P_{D'}:\\
I(P;\bar W)\ge R_{\mathrm T}-\delta}}
H(P\bar W).
\]
Letting first $\delta\downarrow0$ and then
$\eta\downarrow0$, compactness and continuity yield
\[
R_{\mathrm{ID}}
\le
\max_{\substack{P\in\mathcal P_D:\\
I(P;\bar W)\ge R_{\mathrm T}}}
H(P\bar W).
\]
Since the feasible set is nonempty and compact and
$H(P\bar W)$ is continuous, the maximum is attained.
Thus, there exists \(P^\star\in\mathcal P_D\) such that
    $R_{\mathrm T}\le I(P^\star;\bar W)$, and
    $R_{\mathrm{ID}}\le H(P^\star\bar W)$.
    Thus,
\[
(R_{\mathrm{ID}},R_{\mathrm T})
\in
\bigcup_{P\in\mathcal P_{D}}
\left\{
(R_{\mathrm{ID}},R_{\mathrm T}):
\begin{aligned}
&0\le R_{\mathrm T}\le I(P;\bar W)
\\
&0\le R_{\mathrm{ID}}\le H(P\bar W)
\end{aligned}
\right\}.
\]
This completes the converse.

% \section{Conclusion}

%     We studied joint identification, transmission, and sensing over discrete memoryless state-dependent channels with noiseless feedback. For randomized encoding, we characterized the corresponding capacity-distortion region.
    
%     The result shows that, after averaging over the i.i.d. channel states, the communication part is governed by the averaged channel \(\bar W\), while the sensing constraint restricts the admissible input distributions through \(\mathcal P_{D'}\). Thus, sensing can reduce both the transmission and identification capabilities by limiting the set of feasible input laws.
    
%     Future work includes extensions to noisy feedback, causal or noncausal state information, secrecy constraints, and multiuser settings.

\section{Conclusion and Future Work}

    We studied joint identification, transmission, and sensing over discrete memoryless state-dependent channels with noiseless feedback and randomized encoding. The encoder sends both an identification message and a transmission message, while estimating the unknown state sequence under an average distortion constraint. For this model, we derived a single-letter characterization of the randomized identification-transmission-sensing capacity-distortion region.
    The characterization reveals a single-letter separation structure:
after averaging over the i.i.d. states, the communication rates
are determined by the averaged channel $\bar W$, subject to
the sensing-induced input constraint $P\in\mathcal P_D$.
For each feasible input distribution $P$, the transmission and
identification rates are bounded by $I(P;\bar W)$ and $H(P\bar W)$,
respectively. If $\mathcal P_D=\mathcal P(\mathcal X)$, the result
recovers the classical randomized identification-transmission region
with feedback. Otherwise, sensing may restrict the achievable rates
through the admissible input distributions.

    Future work includes noisy or rate-limited feedback, state information at the encoder, secrecy constraints for secure ISAC, multiuser models, Gaussian channels, and finite-blocklength analysis.

\section*{Acknowledgment}
    Christian Deppe and Vida Gholamian were supported in part by the German Federal Ministry of Research, Technology, and Space (BMFTR) through the Transfer Hub 6G-life under Grant 16KIS2415, and by the German Research Foundation (DFG) under Project DE1915/2-1. Christian Deppe, Roberto Ferrara, and Vida Gholamian were further supported in part by the BMFTR within the national initiative QSTARS under Grant 16KIS2602. Marcel Mross and Eduard Jorswieck were supported in part by the Federal Ministry of Research, Technology, and Space (BMFTR) through the xG-RIC project as part of the research program Communication Systems “Souverän. Digital. Vernetzt.” (grant number 16KIS2435).

\bibliographystyle{IEEEtran}
%\bibliographystyle{IEEEtranS}
%\balance
\bibliography{references}

\end{document}